\title{\vspace{-9ex}\centering \bf Adaptive truncation of infinite sums: applications to Statistics}
\author{
Luiz Max Carvalho \& Wellington J. Silva\\
School of Applied Mathematics, Getulio Vargas Foundation, Brazil.\\
\&\\
Guido A. Moreira\\
Bavarian-Nordic, Germany.
}
\DeclareRobustCommand{\stirling}{\genfrac\{\}{0pt}{}}
\newtheorem{theorem}{Theorem}[]
\newtheorem{lemma}{Lemma}[]
\newtheorem{proposition}{Proposition}[]
\newtheorem{remark}{Remark}[]
\newtheorem{definition}{Definition}[]
\begin{document}
\maketitle

\begin{abstract}
In statistics, it is often necessary to compute sums of infinite series, especially when marginalising over discrete latent variables.
This has become increasingly relevant with the popularization of gradient-based techniques (e.g., Hamiltonian Monte Carlo) for Bayesian inference, for which discrete latent variables are challenging to handle.
For well-known infinite series, like the Hurwitz Zeta function or the Conway-Maxwell Poisson normalising constant, custom algorithms have been developed to exploit specific features of each problem.
However, general techniques that apply to a wide range of problems with limited input from the user are less established.
Here we employ basic results from the theory of infinite series to investigate general, problem-agnostic algorithms to approximate (truncate) infinite sums within an arbitrary tolerance $\varepsilon > 0$ and provide robust computational implementations with provable guarantees.
We compare two tentative solutions to estimating the infinite sum of interest: (i) a ``naive'' approach that sums terms until the terms are below the threshold $\varepsilon > 0$;  (ii) a ``bounding pair'' strategy based on trapping the true value between two partial sums.
We demonstrate the regularity conditions under which each method guarantees that the truncated sum is within the desired tolerance.
We show that while this first approach is widely used, it often fails to meet all necessary conditions, limiting its applicability.
In contrast, the second method, although more mathematically restrictive, provides guarantees for a broad class of series commonly encountered in statistics, in particular, for all the problems discussed here.
A comparison of computing times is also provided, along with a detailed discussion of numerical issues in practical implementations.
We discuss various statistical applications, including raw and factorial moments and count models with observation error.
Finally, detailed illustrations in the form of noisy MCMC for Bayesian inference and maximum marginal likelihood estimation are presented.

\textit{Key words and phrases}: Infinite sums; truncation; normalising constants; marginalisation; discrete latent variables. 
\end{abstract}

\newpage
\tableofcontents
\newpage
\section{Introduction}
\label{sec:intro}

Infinite series find use in numerous statistical applications, including estimating normalising constants in doubly-intractable problems~\citep{Wei2017,Gaunt2019}, evaluating the density of Tweedie distributions~\citep{Dunn2005} and Bayesian nonparametrics~\citep{Griffin2016,Burns2023}.
More generally, various applications of Rao-Blackwellisation~\citep{Robert2021} in Markov Chain Monte Carlo depend on marginalising over discrete latent variables, although the literature on this topic is significantly more scant on this topic -- but see \cite{Navarro2009}.
With the advent of powerful gradient-based algorithms such as the Metropolis-adjusted Langevin algorithm (MALA, \cite{Roberts2002}) and dynamic Hamiltonian Monte Carlo (dHMC, \cite{Betancourt2017,Carpenter2017}), the need for marginalising out discrete variables was made even more apparent, as these variables lack the differential structure needed in order to make full use of these algorithms.
Another relevant class of problems is maximising the likelihood function when the model in question uses a discrete distribution whose normalising constant is not known analytically.

While it is possible to find solutions and respective computational implementations that offer \textit{ad hoc} numerical solutions such as Wolfram Alpha (\url{www.wolframalpha.com}),  many statistical problems such as marginalisation and maximum likelihood need to deal with parameters varying at each iteration of the respective algorithm (optimisation or MCMC).
Another common situation is having specific approximations for particular situations, such as the normalising constant of the Conway-Maxwell Poisson distribution~\cite{Gaunt2019} or the Riemann and Hurwitz zeta functions~\cite{Ferreira2004}.
While some of these approximations come with theoretical guarantees and are in general quite efficient -- in the sense of performing fewer computations -- they are also restricted to a narrow class of problems.
A direct consequence is that each researcher is left to their own devices when it comes to providing a solution to their own infinite summation problem.
Critically, calculations should provide both precision and time efficiency.
In this paper, we endeavour to provide a general-purpose method that can replace some commonly used approaches without guarantees.
Additionally, this manuscript can be used as a benchmark to compare the results and computation time of methods tailored to specific problems.
For convenience, we also provide implementations in an R \citep{R} \textit{via} the \textbf{sumR} package (\url{https://cran.r-project.org/web/packages/sumR/index.html}) and in Python in the package \textbf{InfSumPy} (\url{https://pypi.org/project/InfSumPy}).

We discuss adaptive algorithms for approximating infinite sums that are applicable to a plethora of real-world statistical situations, such as computing normalising constants, raw and factorial moments and marginalisation in count models with observation error.
We give theoretical guarantees for the analytical quality (error) of approximation and also discuss technical issues involved in implementing a numerically-stable algorithm.
The remainder of this paper is organised as follows: after some preliminary results are reviewed in Section~\ref{sec:prelim}, Section~\ref{sec:approaches} discusses two adaptive approaches to computing the infinite sum approximately and addresses their merits and pitfalls, as well as their theoretical guarantees.
An exposition of the technical issues involved in numerically stable implementations is provided in Section~\ref{sec:technical}.
We discuss common classes of statistical examples in Section~\ref{sec:applications} and empirical illustrations for noisy Markov chain Monte Carlo and marginal maximum likelihood estimation are carried out in Sections~\ref{sec:comp_noisy_mcmc} and~\ref{sec:mmle_erlang}, respectively.
We finish with a discussion and an overview of avenues for future research in Section~\ref{sec:discussion}.

\subsection{Preliminaries}
\label{sec:prelim}

In this section a few basic concepts and results from the theory of infinite series are reviewed.
They will be needed in the remainder of the paper as the provable guarantees for the proposed approximation schemes rely on them.
The interested reader can find a good resource in~\cite{Rudin1964}.

Many infinite summation problems do not admit closed-form solutions, and one is left with the problem of truncating a finite sum to achieve sufficient accuracy in computational applications.
Let $\left(a_n\right)_{n\geq 0}$ be a non-negative, absolutely convergent series (see Definition~\ref{def:absolutely_convergent} below).
In the remainder of this paper, it will be convenient to define $S_K := \sum_{n=0}^K a_n$.
Now, suppose we want to approximate the quantity
\begin{equation}
 \label{eq:the_sum}
 S := \sum_{n=0}^\infty a_n,
\end{equation}
with an error of at most $\varepsilon >0$, i.e., we want to obtain $\hat{S}$ such that $|\hat{S}-S| \leq \varepsilon$.
In Statistics, problems usually take the form of $a_n = p(n)f(n)$, where $p$ is a (potentially unnormalised) probability mass function (p.m.f.) and $f$ is a measurable function with respect to the probability measure associated with $p$. 
This framework is sufficiently broad to accommodate a range of statistical problems.
For instance, when $p$ is not normalised and $f(n) = 1$ for all $n$,  computing $S$ amounts to computing a normalising constant, whereas when $p$ is normalised and $f$ is the identity function, one is then concerned with computing the expected value.

Given the broad range of (statistical) applications where the problem of truncating infinite series arises, it is perhaps no surprise that no unified framework appears to exist.
A common tactic is to arbitrate a large integer $K$ and use $S_K$ as the estimate for $S$~\citep{Royle2004,Aleshin2021,Benson2021}.
For many applications it is hard to compute the truncation bound explicitly in order to guarantee the truncation is within a tolerance $\varepsilon$ -- see~\cite{Navarro2009} for an example where such bounds can be obtained explicitly.
This `fixed upper bound' approach thus usually comes with no truncation error guarantees.
Further, in some situations $S_K$ might give approximations that have smaller errors than $\varepsilon$ but require increased computation time.
It is therefore desirable to investigate adaptive truncation algorithms where $K$ can be chosen automatically or semi-automatically, so as to provide simultaneously reliable and potentially less onerous approximations.

The assumption that the expectation of the desired measurable function exists implies the existence of absolutely convergent series (see Definition~\ref{def:absolutely_convergent}), since the expectation needs to be unique in order to be well-defined.

\begin{definition}[Convergent and absolutely convergent]
\label{def:absolutely_convergent}
A series $\left(a_n\right)_{n\geq 0}$ is said to be \textbf{convergent} if, for every $\varepsilon>0$, there exist $a \in \mathbb{R}$ and $N_\varepsilon \in \mathbb{N}$ such that  $|\sum_{n=0}^m a_n - a| < \varepsilon$ for every $m\geq N_\varepsilon$.
It is said to be \textbf{absolutely convergent} if $\left(|a_n|\right)_{n\geq 0}$ converges.
\end{definition}
\noindent These definitions relate in that absolute convergence implies convergence, but the converse does not always hold.

In regards to truncation one might be able to derive stronger results and, in particular, obtain finite-iteration guarantees.
Henceforth, non-negative series that fit into a few assumptions will be the focus.
The first assumption made is that the series must pass the ratio test of convergence, meaning that
\begin{equation}
\label{eq:ratioTest}
\lim_{n \to \infty} \frac{|a_{n+1}|}{|a_n|} = L < 1.
\end{equation}
Moreover, $\left(a_n\right)_{n\geq0}$ is required to be decreasing.
In many statistical problems, particularly many p.m.f.s, the probability increases up to a mode and only then begins to decrease.
However, it is only necessary that the assumption \emph{eventually} holds, since the sum can be decomposed as
\begin{equation}
\label{eq:inf_sum_decomp}
\sum_{n=0}^\infty a_n = \sum_{n=0}^{n_0 - 1} a_n + \sum_{n=n_0}^\infty a_n,
\end{equation}
\noindent where $n_0$ is such that $a_{m+1}/a_m < 1$ for all $m \geq n_0$.
One can then define $a_n^\prime = a_{n+n_0}$ and the sum of interest becomes $S = \sum_{n=0}^{n_0} a_n + \sum_{n=0}^\infty a_n^\prime$, with truncation error arising only in the approximation of the second (``tail'') sum.

A main idea that will be explored in this paper is that of finding error-bounding pairs, and in what follows it will be convenient to establish Proposition~\ref{prop:infinite_series_bounds}, which is inspired by the results in~\cite{Braden1992}.
\begin{proposition}[\textbf{Bounding a convergent infinite series}]
\label{prop:infinite_series_bounds}
Let $S_n := \sum_{k = 0}^n a_k$.
Under the assumptions that $\left(a_n\right)_{n\geq 0}$ is positive, decreasing and passes the ratio test, then for every $0 \leq n < \infty$ the following holds:
\begin{equation}
\label{eq:sum_approx_decreasing}
 S_{n} + a_{n} \left( \frac{L}{1-L}\right) < S < S_{n} + a_{n} \left( \frac{1}{1-\frac{a_{n}}{a_{n-1}}}\right),
\end{equation}
if $\frac{a_{n+1}}{a_n}$ \textbf{decreases} to $L$ and 
\begin{equation}
\label{eq:sum_approx_increasing}
 S_{n} + a_{n} \left( \frac{1}{1-\frac{a_{n}}{a_{n-1}}}\right) < S < S_{n} + a_{n} \left(\frac{L}{1-L}\right),
\end{equation}
if $\frac{a_{n+1}}{a_n}$ \textbf{increases} to $L$.
\end{proposition}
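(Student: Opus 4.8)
The plan is to control the truncation remainder $R_n := S - S_n = \sum_{k=n+1}^\infty a_k$ directly, since the two displayed inequalities follow from a single two-sided bound on this tail by adding $S_n$ throughout (using $S = S_n + R_n$). The starting point is to factor out the leading tail term and express the remainder through the consecutive ratios $r_k := a_{k+1}/a_k$, namely
\begin{equation*}
R_n = a_{n+1}\sum_{j=0}^\infty \frac{a_{n+1+j}}{a_{n+1}} = a_{n+1}\sum_{j=0}^\infty \prod_{i=1}^{j} r_{n+i},
\end{equation*}
with the empty product ($j=0$) understood to equal $1$. This rewriting reduces the whole problem to comparing each finite product of ratios against a pure power, which is exactly where the monotonicity hypothesis on $(r_k)$ enters.

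First I would treat the case where $r_k$ \emph{decreases} to $L$. Here every factor obeys $L \le r_{n+i} \le r_n$, because the ratio sequence is non-increasing and bounded below by its limit $L$, so in particular $r_{n+i}\le r_{n+1}\le r_n$ for $i\ge 1$. Multiplying these bounds over $i = 1,\dots,j$ gives $L^{\,j} \le \prod_{i=1}^{j} r_{n+i} \le r_n^{\,j}$, and summing the resulting geometric series (both of which converge, since $L<1$ by the ratio test \eqref{eq:ratioTest} and $r_n<1$ because $(a_n)$ is decreasing) yields
\begin{equation*}
\frac{a_{n+1}}{1-L} \le R_n \le \frac{a_{n+1}}{1 - r_n},
\end{equation*}
which is precisely \eqref{eq:sum_approx_decreasing} after adding $S_n$. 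The case where $r_k$ \emph{increases} to $L$ is symmetric: now $r_n \le r_{n+i} \le L$, so the two geometric comparisons swap roles and produce $a_{n+1}/(1-r_n) \le R_n \le a_{n+1}/(1-L)$, which is \eqref{eq:sum_approx_increasing}.

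The geometric summation is routine; the step that needs care is justifying the strict inequalities and verifying the direction of each bound. Strictness follows from the observation that monotone convergence of $(r_k)$ to $L$ forces at least one factor to differ strictly from the relevant extreme value — for a strictly monotone ratio sequence one has $r_{n+1} \neq L$ and $r_{n+1} \neq r_n$ — so the term-by-term comparison is strict for every $j \ge 1$ and hence remains strict after summation. I would also record explicitly that $r_n < 1$ and $L < 1$ guarantee absolute convergence of both the tail $R_n$ and the comparison series, legitimising the factorisation and the interchange of product and sum; this is where the standing assumptions (positivity, eventual monotonicity secured by the decomposition \eqref{eq:inf_sum_decomp}, and the ratio test \eqref{eq:ratioTest}) are all used together.
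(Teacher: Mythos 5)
Your proof is correct and follows essentially the same route as the paper's: both factor the remainder as $R_n = a_{n+1}\sum_{j\ge 0}\prod_{i=1}^{j} r_{n+i}$, bound each product of consecutive ratios between $L^j$ and $r_n^j$ using the monotone convergence of the ratio sequence, and sum the two geometric series. Your explicit attention to where strictness comes from (and to the convergence of the comparison series) is slightly more careful than the paper's write-up, but it is the same argument.
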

\begin{proof}
See Appendix~\ref{app:sec:proofs}.
\end{proof}

The above result considers the case where the ratio is monotone, which is sufficient for the problems that will be presented here, but a more general version for cases where the position of the terms of the ratio relative to the limit is known (alternating ratio for example) can be found in Appendix~\ref{app:sec:genera_version_of_bounding_pairs}.

\section{Computing truncated sums}
\label{sec:approaches}

As mentioned in Section~\ref{sec:intro}, here we are concerned with adaptive truncation schemes in which the upper bound for summation $K$ is chosen so as to guarantee that $|S - S_K|\leq \varepsilon$.
We now discuss two approaches and evaluate their relative merits.

The first approach aims to align more closely with practical applications in libraries, providing a level of assurance.
And the second approach is based on the case that $\{a_n\}$ passes in the ratio test and $\frac{a_{n+1}}{a_n}$ is monotonic.
Despite being more restrictive, it is the approach among those that will be presented that has the best results, in terms of not evaluating more terms than necessary.

\subsection{Approach 1: Sum-to-threshold}
\label{sec:naive}

In most computer implementations of evaluating infinite sums given an epsilon (in general, $\varepsilon = 2.2\mathrm{e}{-16}$, machine epsilon) it is considered a ``good approximation'' to take $S = \sum_{i=1}^N a_i$ where $a_N < \varepsilon$ and $a_i > \varepsilon\ \forall i < N$.
Here we will present what is necessary for this approach to have controlled error.

Assume the positive series $\left(a_n\right)_{n\geq0}$ passes the ratio test with ratio limit $L$, that is, Equation \eqref{eq:ratioTest} is true for some $L < 1$.
Additionally, choose a number $M\in (L, 1)$.
Then the infinite sum $S$ is approximated up to an error $\varepsilon$ by $S_{n+1}$ if $a_{n+1}\frac{M}{1-M} < \varepsilon$ and $\frac{a_{n+1}}{a_n} \leq M$.
Note that, if $\frac{a_{n+1}}{a_n} \leq M$, then $\frac{a_{n+1}}{a_n} \leq 1$, that is, $\left(a_n\right)_{n\geq0}$ is decreasing.
This approximation is based on Proposition~\ref{prop:L_less_than_M} and can be implemented as in Algorithm~\ref{alg:naive} below.

\begin{proposition}[\textbf{Upper bound on the truncation error}]
\label{prop:L_less_than_M}
Let $S_n := \sum_{k = 0}^n a_k$.
Assume that $\left(a_n\right)_{n\geq 0}$ is positive, decreasing and passes the ratio convergence test for some $L < 1$.
Consider a number $M \in (L, 1)$.
This means that there exists a positive integer $n_0$ such that $\frac{a_{n+1}}{a_n} \leq M$ for every $n > n_0$ and the following holds:

\begin{equation}
\label{eq:naiveImplies}
a_n < \varepsilon \Rightarrow S - S_n < \varepsilon\frac{M}{1-M},
\end{equation}

\noindent for every $\varepsilon > 0$ and $n > n_0$.
\end{proposition}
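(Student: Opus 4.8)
The plan is to control the tail $S - S_n = \sum_{k=n+1}^{\infty} a_k$ by a convergent geometric series of ratio $M$. The only ingredient needed is the inequality supplied by the hypothesis: because $M \in (L,1)$ and $a_{k+1}/a_k \to L$, there is an $n_0$ with $a_{k+1}/a_k \le M$ for all $k > n_0$. Fixing any $n > n_0$, every index $k \ge n$ then satisfies $k > n_0$, so the ratio bound is available at every step of the tail.

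First I would establish the geometric majorization $a_{n+j} \le M^{j} a_n$ for all $j \ge 1$ by induction on $j$. The base case $j = 1$ is exactly $a_{n+1} \le M a_n$, the ratio bound at index $n$. For the inductive step I would write $a_{n+j+1} \le M\, a_{n+j} \le M \cdot M^{j} a_n = M^{j+1} a_n$, where the first inequality is the ratio bound at index $n+j$ (valid since $n+j > n_0$) and the second is the induction hypothesis.

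Next I would sum the majorant. Since the terms are positive and dominated term by term,
\begin{equation*}
S - S_n \;=\; \sum_{j=1}^{\infty} a_{n+j} \;\le\; a_n \sum_{j=1}^{\infty} M^{j} \;=\; a_n\, \frac{M}{1 - M},
\end{equation*}
the geometric series converging precisely because $M < 1$. Substituting the assumption $a_n < \varepsilon$ into this estimate yields $S - S_n \le a_n \frac{M}{1-M} < \varepsilon \frac{M}{1-M}$, which is the claimed strict inequality.

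There is no serious obstacle here: the result is a clean geometric-tail estimate. The two points deserving a little care are (i) verifying that the hypothesis $n > n_0$ is exactly what guarantees the ratio bound at every index appearing in the induction, and (ii) recovering the \emph{strict} final inequality, which comes for free from the strictness of $a_n < \varepsilon$ even though the tail estimate itself is only a weak inequality. One could instead route the argument through Proposition~\ref{prop:infinite_series_bounds}, but the direct geometric majorization is shorter and sidesteps the monotonicity assumptions on the ratio that Proposition~\ref{prop:infinite_series_bounds} imposes.
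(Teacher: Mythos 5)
Your proof is correct, but it takes a genuinely different route from the paper's. The paper proves Proposition~\ref{prop:L_less_than_M} by invoking Proposition~\ref{prop:infinite_series_bounds} and splitting into two cases according to whether the ratio $r_n = a_{n+1}/a_n$ decreases or increases to $L$: in the decreasing case it uses the remainder bound $S - S_n < a_{n+1}\left(1 - \frac{a_{n+1}}{a_n}\right)^{-1}$ and converts the hypothesis $\frac{a_{n+1}}{a_n} \le M$ into $\frac{a_{n+1}}{a_n - a_{n+1}} \le \frac{M}{1-M}$; in the increasing case it uses $S - S_n < \frac{a_{n+1}}{1-L}$ together with $\frac{L}{1-L} < \frac{M}{1-M}$. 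Your direct geometric majorization $a_{n+j} \le M^{j} a_n$ bypasses that machinery entirely. What you gain is both economy and generality: Proposition~\ref{prop:infinite_series_bounds} is stated and proved only under the additional hypothesis that $r_n$ converges \emph{monotonically} to $L$, a condition that appears nowhere in the statement of Proposition~\ref{prop:L_less_than_M}, so the paper's case split silently excludes series whose ratio oscillates while converging; your argument needs only the eventual bound $a_{k+1}/a_k \le M$, which the ratio test supplies, and therefore covers the general case. What the paper's route buys in exchange is coherence with the bounding-pair framework used throughout (the same remainder bounds drive the Error-bounding pairs algorithm), and, in the decreasing-ratio case, a pointwise tighter intermediate estimate $a_{n+1}(1-r_n)^{-1} \le a_n \frac{M}{1-M}$, though that extra sharpness is not needed for the stated conclusion. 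Your closing remarks are also accurate: the strictness of the final inequality comes from $a_n < \varepsilon$, and the index bookkeeping $k \ge n > n_0$ is exactly what licenses the ratio bound at every step of the induction.
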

\begin{proof}
See Appendix~\ref{app:sec:proofs}.
\end{proof}

To link the result from Proposition~\ref{prop:L_less_than_M} and the proposed approach, replace $\varepsilon$ in Equation \eqref{eq:naiveImplies} with $\varepsilon^\prime = \varepsilon\frac{1-M}{M}$ to yield the desired result.
The use of $a_{n+1}$ instead of $a_n$ implies no loss of generality.

\paragraph{Choosing $M$:} the choice of $M$ must be made carefully.
When we select $M$ close to 1, the value of $a_{n+1} \frac{M}{1 - M}$, which represents the error guarantee of the approximation, tends towards infinity.
On the other hand, if $M$ is close to $L$, the inequality $\frac{a_{n+1}}{a_n} \leq M,\ \forall n > n_0$ is only valid for large values of $n_0$.
For this reason, in general, we will choose $M = \frac{1 + L}{2}$ (the midpoint of the interval).

\begin{algorithm}
\caption{Adaptive truncation via Sum-to-threshold}
\label{alg:naive}
\begin{algorithmic}
\State Initialize $a_0$, $a_1$ and $n = 1$
\State Initialize $M = \frac{1+L}{2}$
\While{$a_n\geq\varepsilon\frac{1 - M}{M} \textbf{ or } \frac{a_n}{a_{n-1}}\geq M$}
\State Set $n = n + 1$
\State Evaluate $a_n$
\EndWhile
\State Return $S_n = \sum_{k = 0}^na_k$
\end{algorithmic}
\end{algorithm}

Note that this approach requires a prior evaluation of the problem, since it is valid for $n > n_0$, as we will see in the applications this $n_0$ can be very large, losing any guarantee of bounding.

\subsection{Approach 2: Error-bounding pairs}
\label{sec:bounding_pairs}

A popular technique for approximating infinite sums in Mathematics is trapping the true sum in an interval, and returning its midpoint as an estimate of the desired sum.
The goal of this section is to provide an easy to implement approach based on this classical idea.
Before discussing the approach, however, it is convenient to define the concept of the error-bounding pair.

\begin{definition}[\textbf{Error-bounding pair}]
Consider a convergent series $\left(a_n\right)_{n\geq0}$ with $S = \sum_{n=0}^\infty a_n$ and let $\left(L_n\right)_{n\geq0}$ and $\left(U_n\right)_{n\geq0}$ be decreasing sequences with $\lim_{n\to\infty} L_n = \lim_{n\to\infty} U_n = 0$ such that
\begin{equation*}
    L_n < S - S_n < U_n,
\end{equation*}
holds for all $n$.
We then call $(\{L_n\}, \{U_n\})$ an \textbf{error-bounding} pair which traps the true sum $S$ in a sequence of intervals $[S_n + L_n, S_n + U_n]$ of decreasing width.
\end{definition}

Now, assume the positive decreasing series $\left(a_n\right)_{n\geq 0}$ passes the ratio test, that is, Equation \eqref{eq:ratioTest} is true for some $L < 1$ and the ratio $r_{n} = \frac{a_{n+1}}{a_{n}}$ is monotonic.
Then, we have~\footnote{This is for the case where $r_n$ decreases to $L$. The reverse case is analogous with the positions of the limits reversed.}
\begin{align*}
    a_{n}\left(\frac{L}{1-L}\right) < S - S_{n} < a_{n} \left( 1- \frac{a_{n}}{a_{n-1}}\right)^{-1}
\end{align*}
is a bounding pair and the infinite sum $S$ is truncated up to an error $\varepsilon$ by
\begin{equation}
\label{eq:adaptSum}
S_n + \frac{a_{n}}{2}\left(\left(\frac{L}{1-L}\right) + \left(1-\frac{a_{n}}{a_{n-1}}\right)^{-1}\right),
\end{equation}
\noindent if 
\begin{equation}
\label{eq:boundsDiff}
a_{n}\left(1-\frac{a_{n}}{a_{n-1}}\right)^{-1} < 2 \varepsilon.
\end{equation}
Since Proposition~\ref{prop:infinite_series_bounds} bounds the remainder of the sum, equation~\eqref{eq:boundsDiff} ensures that the bounds will be up to a $2\varepsilon$ distance of each other.
Then Equation \eqref{eq:adaptSum} takes the middle point between the bounds, which guarantees that its expression is within $\varepsilon$ of the true sum. See the pseudo-code provided in Algorithm \ref{alg:adaptive}.

\begin{algorithm}
\caption{Adaptive truncation via Error-bounding pairs}
\label{alg:adaptive}
\begin{algorithmic}
\State Initialize $a_0$, $a_1$ and $n = 0$
\While{$a_{n} > a_{n-1} \textbf{ or } a_{n}\left(1-\frac{a_{n}}{a_{n-1}}\right)^{-1}\geq2\varepsilon$}
\State Set $n = n + 1$
\State Evaluate $a_{n}$
\EndWhile
\State Evaluate $S_{n} = \sum_{k = 0}^{n}a_k$
\State Return $S_{n} + \frac{a_{n}}{2}\left(\left(\frac{L}{1-L}\right) + \left(1-\frac{a_{n}}{a_{n-1}}\right)^{-1}\right)$
\end{algorithmic}
\end{algorithm}

\begin{theorem}[\textbf{Error-bounding pairs dominates Sum-to-threshold}]
\label{theo:bounding_vs_threshold}
    Let $\{a_n\}$ positive, decreasing and passes in the ratio test with $L < 1$.
    If $r_n = \frac{a_{n+1}}{a_n}$ is monotonic, i.e. we can apply both approaches, Bounding Pairs and Sum-To-Threshold.
    Then the Bounding Pairs uses fewer terms in comparison with the Sum-To-Theshold.
\end{theorem}

\begin{proof}
    See Appendix~\ref{app:sec:proofs}.
\end{proof}

An important result of this section is that when the sequence $\frac{a_{n+1}}{a_n}$ is monotonic, meaning when both approaches can be applied, then the first one yields a smaller error in the sum.
This is stated and proven in Theorem~\ref{theo:bounding_vs_threshold}.
This result makes sense since the bounding pairs approach is more restrictive.
It suggests verifying the monotonicity of the ratio of terms before applying the approach.
In Section~\ref{sec:applications}, we will see how these approaches work in practice on common problems.

\section{Computational aspects}
\label{sec:technical}

In this section we discuss how the mathematical guarantees described in Section \ref{sec:approaches} can guide the development of a computational implementation.
The interested reader is referred to Chapter 4 in~\cite{Higham2002} and to~\cite{Rump2008} and~\cite{Rump2009} for further reading on the computational aspects of numerical stability and efficiency.
\cite{Neal2015} discusses exact summation using parallel algorithms.

\paragraph{Logarithmic scale:} it is a well established fact that computation in the log scale is more stable in the sense that there are fewer situations that lead to numerical underflow and even overflow.
This is particularly important for infinite summations, since precision loss can lead to dangerous rounding error propagation.

A common method for adding numbers available in the log scale is known as the log-sum-exp algorithm.
Let $(x_1, \ldots, x_n)$ be $n$ positive numbers and $(l_1, \ldots, l_n)$ their respective natural logarithms.
Also, let $l_{(n)}$ be the latter's largest value.
Then:
\begin{equation}
\begin{aligned}
\log\hspace{0.1cm}\sum_{i = 1}^nx_i & = \log\hspace{0.1cm}\sum_{i = 1}^n \exp(l_i)\\
& = l_{(n)} + \log\left\{1 + \sum_{\substack{i = 1\\i \neq (n)}}^n\exp\left(l_i - l_{(n)}\right)\right\}.
\end{aligned}
\end{equation}

\noindent Conveniently, the function {\small\texttt{log1p(x)}} which computes $\log (1 + x)$ is implemented in a stable manner in most mathematical libraries and can be readily employed to implement the log-sum-exp technique.
This trick helps to greatly reduce precision loss in such summations.

\paragraph{Kahan summation:} when summing many terms in floating point precision, one should be careful to avoid cancellation errors.
There are many compensated summation algorithms that attempt to avoid catastrophic cancellation.
The so-called Kahan summation algorithm~\citep{Kahan1965} is one such technique that allows one to compute long sums with minimal round-off error. 
In our implementations we have taken advantage of Kahan summation, the benefits of which are summarised in Theorem~\ref{thm:kahan}.
First, however, it is convenient to define the condition number of a sum (Definition~\ref{def:cond_number}).
\begin{definition}[Condition number]
\label{def:cond_number}
For a sum $S := \sum p_i \neq 0$, the condition number is defined by
\begin{align*}
    \operatorname{cond}(S) &:= \limsup_{\varepsilon \to 0} \left\{ \left| \frac{\sum \tilde{p}_i - \sum p_i}{\varepsilon \sum p_i}\right| : |\tilde{p}| \leq \varepsilon|p| \right\},\\
    & = \frac{\sum p_i}{\left| \sum p_i\right|},
\end{align*}
where we take the absolute values and comparisons element-wise.
\end{definition}
All of the discussion in this paper centres around computing sums of non-negative terms, i.e., for which $\operatorname{cond}(S)=1$.
Now we are prepared to state
\begin{theorem}[Kahan summation algorithm]
\label{thm:kahan}
Consider computing $S_N = \sum_{n=0}^N x_n$. 
\begin{center}
\begin{algorithmic}
\State $\tilde{S} \gets x_0$
\State $C \gets 0$
\For{$j = 1$ to $N$} 
\State $Y \gets x_j - C$
\State $T \gets \tilde{S} + Y$
\State $C \gets (T-\tilde{S}) - Y$
\State $\tilde{S} \gets T$
\EndFor 
\end{algorithmic}    
\end{center}
then 
\begin{equation}
    \frac{|S_N - \tilde{S}|}{|S_N|} \leq \left[2\delta + O(N\delta^2)\right]\operatorname{cond}(S_N),
\end{equation}
where $\delta$ is the machine precision.
\end{theorem}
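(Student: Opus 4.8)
The plan is to invoke the standard model of floating-point arithmetic and reduce the claim to a backward-error statement. Under this model, every elementary operation satisfies $fl(a \circ b) = (a \circ b)(1 + \theta)$ with $|\theta| \leq \delta$, for $\circ \in \{+,-\}$ and assuming no overflow or underflow occurs. I would first observe that, because $\operatorname{cond}(S_N) = \sum_n |x_n| / |\sum_n x_n|$ by Definition~\ref{def:cond_number}, it suffices to establish the backward-error representation
\begin{equation*}
\tilde{S} = \sum_{n=0}^N (1 + \mu_n) x_n, \qquad |\mu_n| \leq 2\delta + O(N\delta^2),
\end{equation*}
for the value $\tilde{S}$ returned by the algorithm. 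Indeed, from this it follows that $|S_N - \tilde{S}| = |\sum_n \mu_n x_n| \leq (\max_n |\mu_n|) \sum_n |x_n|$, and dividing by $|S_N|$ and recognising the condition number immediately yields the stated inequality.

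The heart of the argument is a careful analysis of a single iteration, which is where the compensation mechanism earns its keep. I would track the exact rounding errors introduced at each of the three floating-point operations in the loop body, writing the computed update as $Y = (x_i - C)(1+\theta_1)$, $T = (\tilde{S} + Y)(1+\theta_2)$, and so on. The key fact to establish is that the computed compensation $C$ recovers, to first order in $\delta$, precisely the low-order part of $x_i - C$ that was discarded when forming $T$. Concretely, one shows that the updated pair $\tilde{S}_{\text{new}}$ and $C_{\text{new}}$ satisfies $\tilde{S}_{\text{new}} + C_{\text{new}}$ equal to the exact running sum up to a second-order remainder, so that the error fed forward to the next step is $O(\delta^2)$ rather than $O(\delta)$. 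This is the classical Kahan analysis (see Chapter~4 of~\cite{Higham2002}); the delicate point is justifying that the subtraction $(T - \tilde{S})$ behaves benignly, i.e. that it faithfully captures the genuinely added high-order part of $Y$, which holds under the floating-point model when the summands do not differ too wildly in magnitude.

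Accumulating these per-step estimates across the $N$ iterations, the leading error term attached to each $x_n$ comes from the two roundings it directly incurs (in forming $Y$ and $T$), producing the coefficient $2\delta$, while the feed-forward of the compensation contributes only the $O(N\delta^2)$ correction. The main obstacle I anticipate is precisely this bookkeeping: one must verify that the second-order contributions, summed over $N$ steps, remain genuinely $O(N\delta^2)$ and do not covertly inflate the leading term by a factor of $N$ --- this is exactly what separates Kahan summation from naive recursive summation, whose error bound carries a full $N\delta$. Once the backward-error representation above is in hand, the conclusion follows at once from the triangle inequality and the definition of $\operatorname{cond}(S_N)$.
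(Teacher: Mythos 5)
Your proposal is correct and follows essentially the same route as the paper, which does not give an original argument but simply defers to the classical analysis (the Appendix and Theorem~8 of \cite{Goldberg1991}, with the same treatment in Chapter~4 of \cite{Higham2002}): that analysis is precisely the backward-error representation $\tilde{S} = \sum_n (1+\mu_n)x_n$ with $|\mu_n| \leq 2\delta + O(N\delta^2)$, followed by the triangle inequality and the definition of $\operatorname{cond}(S_N)$ that you describe. Your sketch also correctly identifies the crux that the cited proof carries out in detail --- showing the compensation term captures the discarded low-order bits so that only $O(\delta^2)$ error is fed forward per step, keeping the leading coefficient at $2\delta$ rather than $N\delta$ --- so nothing essential is missing relative to what the paper itself provides.
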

\begin{proof}
See the Appendix of~\cite{Goldberg1991} and the discussion of Theorem 8 therein.
\end{proof}
The discussion in Section 4.3 of \cite{Goldberg1991} is particularly helpful.
A 64-bit system has $\delta = 2^{-53} \approx 10^{-16}$.
In particular, here we shall use the maximum double precision in the R software \citep{R}, which is $\delta = 2.2 \times 10^{-16}$, and can be called with the command \texttt{\small .Machine\$double.eps}.
In Python~\citep{python312} we use the \texttt{mpmath} library~\citep{Johansson2023} to use a precision higher than machine precision, with respective machine-epsilon.

\paragraph{Ordering:} an additional trick to reduce floating point precision loss can be derived from the Kahan summation described above.
The problem that this addresses arises mainly from adding two numbers with very different orders of magnitude.
Thus the mantissa of the smallest number will be rounded off so it can be added to the largest one.
One way to mitigate this problem is to add numbers whose orders of magnitude are closest.
This can be done when summing numbers in a vector, as it is best to always add the smallest ones first, then the larger ones.
This way the smallest numbers are fully accounted for before being rounded off.
Therefore the best practice is to order the numbers in ascending order before performing the summation.

\paragraph{When $a_0$ is relatively large:} the series discussed in this paper increase up to the maximum, then enter a decreasing regime where one can employ the truncation methods with provable guarantees.
This means that one has, in principle, to check whether the maximum has been achieved before employing the truncation.
But if $a_0$ is already larger than the desired $\varepsilon$, there is no need to check whether the maximum is reached, since $a_n$ will never be smaller than $\varepsilon$ between $a_0$ and the mode.
This can ease computations as there is no need to find the maximum before checking for convergence.

\paragraph{Value for cost:} the two methods discussed in Section \ref{sec:approaches} have different implementation idiosyncrasies.
The Sum-to-threshold method is clearly the most straightforward, but it is not better than the Bounding pairs method, and its applicability is restricted to the evaluation of $n_0$.

The bounds provided by the Error-bounding pairs method are useful because not only does it have better results, but \cite{Braden1992} also argues that it provides faster convergence in the sense that it requires fewer iterations and therefore fewer function evaluations.




It is possible to try to figure out how many iterations are needed for the Sum-to-threshold and Error-bounding pairs methods before performing any function evaluations.
This way, one could make use of vectorisation for these procedures.
In particular, the respective roots of the convergence checking inequalities can indicate how many iterations must be done.
Testing has shown that while this does in fact provide the correct answer, finding the root is slower than simply checking for convergence at every step.

\paragraph{Unavoidable errors:} despite the possible numerical treatments described above, the most one can do is try to minimize them, not remove them completely.
There is still the issue that a number with infinite precision cannot be exactly represented by a computer.
Consequently, it is still possible to come across examples where the requested error $\varepsilon$ is not reached despite the mathematical guarantees discussed in Section \ref{sec:approaches}.
This can happen especially when $\varepsilon$ is close to the computer's floating point representation limit.

Upon testing the implementations from Section \ref{sec:approaches} with infinite sums whose exact values are known (see Appendix~\ref{app:sec:tests}), there have been cases where the algorithm has correctly reached the stopping point for $\varepsilon = \delta =  2.22\times10^{-16}$, but the resulting summation had an error of order $10^{-13}$.
This is not an issue with the methods themselves but with floating point representation.
More computational tricks can be implemented, but it will always be possible to find further failing examples.
This problem can in principle be addressed in some programming languages by extending numerical precision, but this requires knowledge of the needs of each problem due to the increased computational burden of increasing precision and must be evaluated accordingly.

\section{Statistical applications: Theory 
}
\label{sec:applications}

Before moving on to  test the proposed truncation schemes empirically, we discuss some types of statistical problems where they might be useful, giving theoretical guarantees where possible.
Some of the examples discussed in this section are used for the supplementary tests provided in Appendix~\ref{app:sec:tests}.

\subsection{Normalising constants}
\label{sec:norm_consts}

The first class of problems we would like to consider is computing the normalising constant for a probability mass function.
Let $X$ be a discrete random variable with support on $\mathbb{N} \cup \{0 \}$ and let $\tilde{p} : \mathbb{N}\cup \{ 0\} \to (0, \infty)$ be an unnormalised p.m.f. associated with $P$ such that
$$
\operatorname{Pr}(X = x) = \frac{1}{Z}\tilde{p}(x), x = 0, 1, \ldots,
$$
and
$$
Z := \sum_{n=0}^\infty \tilde{p}(n).
$$
While many p.m.f.s pass the ratio test, not all of them do.
For some p.m.f.s the limit of consecutive terms, $L$, can be exactly 1, which means that the ratio test for $a_n = \tilde{p}(n)$ is inconclusive.
This is the case for a p.m.f. of the form
\begin{equation}
\label{eq:weirdp.m.f.}
\tilde{p}(n) = \frac{1}{n + 1} - \frac{1}{n+2}, \quad n = 0, 1, \ldots
\end{equation}
The expression in~\eqref{eq:weirdp.m.f.} does sum to 1, but the ratio $a_{n+1}/a_n = (n + 1)/(n + 3)$ converges to $L = 1$.
Another example of a p.m.f. that does not pass the ratio test is the Zeta distribution, for which $\tilde{p}(n) = n^{-s}$, for $s \in (1, \infty)$.
Nevertheless, this is not the case for most p.m.f.s, as the distribution needs to have extremely heavy tails in order for it to yield an inconclusive ratio test.

Consider the unnormalised probability mass function of the Double Poisson distribution~\citep{Efron1986}:
\begin{equation*}
\operatorname{Pr}\left(Y  = y \mid \mu, \phi \right) \propto \tilde{p}_{\mu, \phi}(y) =   \frac{\exp(-y)y^y}{y!}\left(\frac{\exp(1)\mu}{y}\right)^{\phi y},
\end{equation*}
for $\mu, \phi >0$.
The problem at hand is to compute $K(\mu, \phi) := \sum_{n=0}^\infty \tilde{p}_{\mu, \phi}(n)$ with controlled error.
We show in Appendix~\ref{app:sec:computing_L} that for this example, $L=0$, and thus the Sum-to-threshold and Bounding pairs approaches would be suitable.
Many p.m.f.s with unknown normalising constants have $L=0$; see the Conway-Maxwell Poisson distribution in Section~\ref{sec:experiments} for another example.

For $a > 1$, consider a probability mass function (p.m.f.) of the form

\[
\tilde{p}_a(x) = \frac{1}{(x+1)^2 a^{x+1}},
\]

where $L = 1/a$. Since $a > 1$, the p.m.f. satisfies the ratio test, and the ratio $\frac{a_{n+1}}{a_n}$ is increasing. This setup aligns with the assumptions discussed previously. 
To illustrate the difference between the two approaches in an asymptotic scenario, let us consider the case where $a = 2$. In this case, using the Sum-to-threshold method with a tolerance of $\varepsilon = 2.2 \times 10^{-16}$, we need 42 terms to ensure that $|S - S_n| \leq \varepsilon$. In contrast, the Error-bounding pairs method requires only 37 terms to achieve the same level of accuracy. As $a$ approaches to 1, making the problem more difficult, the Error-bounding pairs method proves to be more effective, as demonstrated in Table~\ref{tab:stt_vs_bp}.

\subsection{Raw and factorial moments of discrete random variables}
\label{sec:moments}

Another large class of statistical problems involves computing moments, which can be used in generalised method of moments estimation~\citep{Hall2004}, for example.
A brief discussion is presented on the applicability of the methods developed here to the problem of computing moments of random variables defined with respect to discrete probability distributions.

First, in Remark~\ref{rmk:factmom} it is shown that computing raw and factorial moments is amenable to the techniques developed here under the assumption that the p.m.f. passes the ratio test.
In summary, as long as the p.m.f. passes the ratio test, one will be able to use the methods developed here to accurately compute moments with guaranteed truncation error.

\begin{remark}[\textbf{Approximating raw and factorial moments}]
\label{rmk:factmom}
Let $X$ be a discrete random variable with support on $\mathbb{N}\cup \{0\}$ with distribution $\mathcal{M}(\boldsymbol{\theta})$ and p.m.f. given by $\operatorname{Pr}(X =x) = f(x \mid \boldsymbol{\theta})$.
Suppose one is interested in either raw ($E_{\mathcal{M}}[X^r]$) or factorial ($E_{\mathcal{M}}[(X)_r]$) moments, for some order $r\geq1$.
If we have $\lim_{n \to \infty} f(x+1\mid \boldsymbol{\theta})/f(x\mid \boldsymbol{\theta}) = L < 1$, then one can use the Sum-to-threshold to approximate $E_{\mathcal{M}}[X^r]$ or $E_{\mathcal{M}}[(X)_r]$ to a desired accuracy $\varepsilon > 0$.
Additionally, if $f(x+1\mid \boldsymbol{\theta})/f(x\mid \boldsymbol{\theta})$ is also decreasing in x, we can use the Bounding-pairs method.
\end{remark}
\begin{proof}
See Appendix~\ref{app:sec:proofs}.
\end{proof}
The example in Section~\ref{sec:comp_noisy_mcmc}, while relating to the computation of a normalising constant, can also be seen as that of computing a factorial moment.

\subsection{Marginalisation: the case of count models with observation error}
\label{sec:obs_error}

Marginalising out discrete variables is crucial for algorithms such as dynamic Hamiltonian Monte Carlo (dHMC) employed in Stan~\citep{Carpenter2017}, which rely on computing gradients with respect to all random quantities in the model and thus cannot handle discrete latent quantities directly.
Hence, marginalisation constitutes an essential class of infinite series-related problems.
It is straightforward to show that marginalisation is particularly amenable to the techniques discussed here (Proposition~\ref{prop:marginal}).

\begin{proposition}[\textbf{Marginalisation and the ratio test}]
\label{prop:marginal}
Let $X$ be a discrete random variable with support on $\mathbb{N} \cup \{0 \}$ and $Y$ be any measurable function that leads to some probability space $(\Omega, \mathcal{A}, \mathcal{P})$, so that a joint probability space for $X$ and $Y$ is well-defined as well.
For any $E\in\mathcal{A}$, define marginalisation as the operation
\begin{equation}
\mathcal{P}(E) = \sum_{n=0}^\infty \operatorname{Pr}(X=n, Y\in E),
\end{equation}
that is, the marginal distribution of $Y$ is the sum over all possible values of $n$ of the joint distribution of $X$ and $Y\in E$.
Denote $\operatorname{Pr}(X = n\mid Y\in E)$ as the conditional probability function of $X$ given $Y\in E$, for all $E\in \mathcal{A}$ with positive measure. If this p.m.f passes the ratio test a.s., that is, if
\begin{equation}
\lim_{n\rightarrow\infty}\frac{\operatorname{Pr}(X=n+1\mid Y\in E)}{\operatorname{Pr}(X=n\mid Y\in E)} = L(E),
\end{equation}
\noindent for some $L(E)<1$ and almost all $E\in \mathcal{A}$ for which $\mathcal{P}(E) > 0$, then the marginalisation operation also passes the ratio test a.s.
\end{proposition}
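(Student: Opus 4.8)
The plan is to exploit the elementary factorisation of the joint probability into a conditional probability times the marginal, and to observe that the marginal factor is constant in $n$ and therefore cancels when one forms the ratio of consecutive terms. Writing $a_n := \operatorname{Pr}(X = n, Y \in E)$, the terms of the marginalisation sum $\mathcal{P}(E) = \sum_{n=0}^\infty a_n$ are precisely the quantities whose consecutive ratios must be shown to have a limit below $1$. Since the conditional p.m.f. is assumed positive (passing the ratio test with a finite, positive limiting ratio) and we will work only with events of positive measure, the $a_n$ are themselves positive, so the ratios are well defined.

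First I would fix an arbitrary $E \in \mathcal{A}$ with $\mathcal{P}(E) > 0$, which guarantees both that the conditioning event has positive probability and that the conditional p.m.f. $\operatorname{Pr}(X = n \mid Y \in E)$ is well defined. By the definition of conditional probability,
\begin{equation*}
a_n = \operatorname{Pr}(X = n, Y \in E) = \operatorname{Pr}(X = n \mid Y \in E)\,\mathcal{P}(E).
\end{equation*}
Because $\mathcal{P}(E)$ does not depend on $n$ and is strictly positive, it factors out of the ratio of consecutive joint terms and cancels:
\begin{equation*}
\frac{a_{n+1}}{a_n} = \frac{\operatorname{Pr}(X = n+1 \mid Y \in E)\,\mathcal{P}(E)}{\operatorname{Pr}(X = n \mid Y \in E)\,\mathcal{P}(E)} = \frac{\operatorname{Pr}(X = n+1 \mid Y \in E)}{\operatorname{Pr}(X = n \mid Y \in E)}.
\end{equation*}
Letting $n \to \infty$ and invoking the hypothesis that the conditional p.m.f. passes the ratio test for this $E$ then gives $\lim_{n\to\infty} a_{n+1}/a_n = L < 1$, so the marginalisation sum passes the ratio test for this same $E$.

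To promote this pointwise statement to the claimed almost-sure conclusion, I would note that the cancellation above holds verbatim on every $E$ with $\mathcal{P}(E) > 0$; hence the collection of events on which the marginalisation passes the ratio test contains the collection on which the conditional p.m.f. does, and the latter is assumed to hold for almost all such $E$. The main obstacle is therefore not the algebra---the cancellation is immediate---but the measure-theoretic bookkeeping surrounding the qualifier ``almost all $E$''. Concretely, I would fix a version of the regular conditional distribution of $X$ given $Y$ (defined up to a $\mathcal{P}$-null set), ensure that division is legitimate by restricting to $\mathcal{P}(E) > 0$, and verify that the exceptional null set on which the ratio-test hypothesis for the conditional law may fail is exactly the set discarded for the marginalisation, so that passing from the conditional to the marginal sum introduces no further exceptional events.
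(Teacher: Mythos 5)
Your proof is correct and follows essentially the same route as the paper's: factor the joint probability as $\operatorname{Pr}(X=n, Y\in E) = \operatorname{Pr}(X=n\mid Y\in E)\,\mathcal{P}(E)$, cancel the $n$-free factor $\mathcal{P}(E)$ in the ratio of consecutive terms, and conclude directly from the hypothesis on the conditional p.m.f. The additional measure-theoretic bookkeeping you sketch at the end is a reasonable elaboration but not something the paper's (very short) proof dwells on.
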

\begin{proof}
See Appendix~\ref{app:sec:proofs}.
\end{proof}

Now some illustrations are presented on how adaptive truncation might be employed for marginalisation by discussing count models with observation error, which have applications in Ecology and Medicine.
A common situation is modelling (e.g. disease) cluster sizes with a random variable $Y_i$, which represents the size (number of individuals in) a cluster.
Suppose that we observe each individual with probability $\eta$ but if any individual in the cluster is observed, the whole cluster is observed.
This is the so-called size-dependent or sentinel detection model.
It is common, for instance, in quality control and disease contact-tracing  settings~\citep{Blumberg2013b}. 
The model can be formulated as

\begin{align*}
 Y_i &\sim \mathcal{M}(\boldsymbol\theta), \: i = 0, 1, 2, \ldots, K, \\
 Z_i \mid Y_i & \sim \operatorname{Bernoulli}\left( 1-(1-\eta)^{Y_i} \right), \\
 X_i \mid Z_i &= \begin{cases}
	  0, Z_i = 0,\\
	  Y_i, Z_i = 1.
	 \end{cases} 
\end{align*}

Here $\mathcal{M}$ is a discrete distribution with unbounded support, indexed by parameters $\boldsymbol{\theta}$.
The actual observed data is represented by the random variable $X_i$, whilst the $Z_i$ and $Y_i$ are latent. 
We can then write

\begin{align}
\label{eq:p_of_X_sentinel}
\rho_0 &:= E_{\mathcal{M}}[(1-\eta)^{Y_i}] = \sum_{n = 0}^\infty \operatorname{Pr}(Y_i = n  \mid  \boldsymbol\theta)(1-\eta)^n,\\	
\nonumber
 \operatorname{Pr}(X_i = x \mid  \boldsymbol\theta, \eta) &= \begin{cases}
	  \rho_0, x = 0,\\
	  \operatorname{Pr}(Y_i = x  \mid  \boldsymbol\theta)\left(1-(1-\eta)^x\right), x > 0.
	 \end{cases} 
\end{align}

Since when $X_i = 0$ we do not observe data, i.e. detect the cluster, we need to actually model the zero-truncated random variable $X_i^\prime$.
Under zero-truncation, we can write  the p.m.f. of $X_i^\prime$ as
\begin{equation*}
 \label{eq:p_of_X_prime_sentinel}
  \operatorname{Pr}(X_i^\prime = x^\prime) = \frac{\operatorname{Pr}(Y_i = x^\prime  \mid  \boldsymbol\theta)\left(1-(1-\eta)^{x^\prime}\right)}{1-\rho_0}.
\end{equation*}
And the first moment of $X_i^\prime$ is
\begin{align*}
 \label{eq:expectation_X_prime_sentinel}
 E[X_i^\prime] &= \sum_{n=0}^\infty (n + 1) \frac{\operatorname{Pr}(Y_i = n + 1 \mid  \boldsymbol\theta)\left(1-(1-\eta)^{(n+1)}\right)}{1-\rho_0},\\
 &= \frac{E_{\mathcal{M}}\left[Y_i\right]-E_{\mathcal{M}}\left[Y_i(1-\eta)^{Y_i}\right]}{1-\rho_0}, 
\end{align*}
which is itself dependent on also computing the infinite sum $\sum_{n = 0}^\infty n \operatorname{Pr}(Y_i = n  \mid  \boldsymbol\theta)(1-\eta)^n$.

This formulation is thus contingent on $\rho_0$ being easy to compute, preferably in closed-form.
When $\mathcal{M}$ is a Poisson distribution with rate $\lambda$, we know that $\rho_0 = e^{-\lambda\eta}$ and when it is a negative binomial distribution with mean $\mu$ and dispersion $\phi$, we have $\rho_0 =  \left[ \frac{\phi}{ \eta\mu + \phi }  \right]^\phi$.
In both cases, $L = 0$ -- see Remark~\ref{rmk:marg_count_error}.
In Appendix~\ref{app:sec:tests} we exploit an example where the closed-form solution is known in order to evaluate the proposed truncation schemes for computing $\rho_0$ as given in (\ref{eq:p_of_X_sentinel}).

The size-independent or binomial model of observation error is also a very popular choice, finding a myriad of applications in Ecology (see, e.g.~\cite{Royle2004}).
The model reads
\begin{align*}
 Y_i &\sim \mathcal{M}(\boldsymbol\theta), \: i = 0, 1, 2, \ldots, K, \\
 X_i \mid Y_i & \sim \operatorname{Binomial}(Y_i, p).
\end{align*}

Next a specialisation of Proposition~\ref{prop:marginal} is provided for count models with both size-dependent and (binomial) size-independent observation error, summarised in Remark~\ref{rmk:marg_count_error}.
\begin{remark}[\textbf{Marginalisation in both size-dependent and size-independent observation error models passes the ratio test}]
Under the assumption that the pmf passes the ratio test and the moments exist, one can show that both raw and factorial moments also pass the ratio test and thus can be approximated using Sum-to-threshold or with Error-bounding pairs if the ratio of terms is monotonic.
\label{rmk:marg_count_error}
\end{remark}
\begin{proof}
See Appendix~\ref{app:sec:proofs}.
\end{proof}

In addition to the computational advantages of being able to use efficient algorithms to fit models that otherwise would not be tractable, marginalisation  might also provide improved \textbf{statistical} efficiency due to being a form of Rao-Blackwellisation~\citep{Robert2021}.
See Section 6 of~\cite{Pullin2020} for more discussion on the statistical benefits of marginalisation.

\section{Statistical applications: Illustrations}
\label{sec:experiments}

Now that common statistical applications of (adaptive) truncation methods have been discussed, in this section some fully worked out empirical examples are provided where adaptive truncation can be employed to improve the computational aspects of important statistical applications.
These cover from normalising constants to marginalisation in maximum likelihood estimation.

\subsection{Noisy MCMC for the Conway-Poisson distribution}
\label{sec:comp_noisy_mcmc}

We start our investigation with an example of Markov chain Monte Carlo (MCMC) with a noisy approximation of the likelihood.
The Conway-Maxwell Poisson distribution (COMP, \cite{Conway1962}) is a popular model for count data, mainly due to its ability to accommodate under- as well as over-dispersed data - see \cite{Sellers2012} for a survey.
For $\lambda > 0$ and $\nu > 0$, the COMP probability mass function (p.m.f.) can be written as
\begin{equation*}
    p_{\lambda, \nu}(n) := \operatorname{Pr}(X = n \mid \lambda, \nu) = \frac{\lambda^n}{Z(\lambda, \nu) (n!)^\nu},
\end{equation*}
where 
\begin{equation}
    \label{eq:COMP_normalising}
    Z(\lambda, \nu) := \sum_{n=0}^\infty \frac{\lambda^n}{(n!)^\nu}
\end{equation}
is the normalising constant.
The sum in \eqref{eq:COMP_normalising} is not usually known in closed-form for most values of $(\lambda, \nu)$ and thus needs to be computed approximately.
Notable exceptions are $Z(\lambda, 1) = \exp(\lambda)$ and $Z(\lambda, 2) = I_0(2\sqrt{\lambda})$, where $I_0$ is the modified Bessel function of the first kind -- see Section~\ref{sec:mmle_erlang} below.
While custom approximations have been developed for the COMP normalising constant~\citep{Gaunt2019}, these usually do not guarantee that one is able to compute the approximate normalising constant within a given tolerance.
We thus employ the methods developed here to consider approximations with a guaranteed approximation error; in this case the Sum-to-threshold approach with threshold $\varepsilon$ delivers the approximate sum within $\varepsilon$ tolerance, as summarised in Remark~\ref{rmk:comp_naive}.

\begin{remark}[\textbf{Error-bounding pairs truncation for the Conway-Maxwell Poisson}]
\label{rmk:comp_naive}
A error-bounding pairs truncation scheme will yield an approximation of $Z(\lambda, \nu)$  within $\varepsilon >0$, i.e. $|Z(\lambda, \nu)-\sum_{n=0}^{n^\star} p_{\lambda, \nu}(n)| \leq \varepsilon$, for all $n^\star$ such that $p_{\lambda, \nu}(m) \left( 1 - \frac{p_{\lambda, \nu}(m)}{p_{\lambda, \nu}(m-1)}\right)^{-1} \leq \varepsilon$ for all $m \geq n^\star$.
\end{remark}
\begin{proof}
See Appendix~\ref{app:sec:proofs}.
\end{proof}

Consider the situation where one has observed some independent and identically distributed (i.i.d.) data $\boldsymbol{y}$ assumed to come from a COMP distribution with parameters $\lambda$ and $\nu$ and one would like to obtain a posterior distribution $p(\lambda, \nu \mid \boldsymbol{y}) \propto f(\boldsymbol{y} \mid \lambda, \nu)\pi(\lambda, \nu)$.
This Bayesian inference problem constitutes a so-called doubly-intractable problem, because neither the normalising constant of the posterior $p(\lambda, \nu \mid \boldsymbol{y})$ nor that of the likelihood $f(\boldsymbol{y} \mid \lambda, \nu)$ are known.
In some situations, one may bypass computing $Z(\lambda, \nu)$ entirely, but this entails the use of specialised MCMC algorithms~\citep{Benson2021}.
An alternative to the rejection-based algorithm of~\cite{Benson2021} are the so-called noisy algorithms, where the likelihood is replaced by a (noisy) estimate at every step of the MCMC~\citep{Alquier2016}.
Our adaptive truncation approaches integrate seamlessly into this class of algorithms, with the added benefit that an approximation of the likelihood with controlled error will yield an algorithm where in principle one can make the approximation error negligible compared to Monte Carlo error.

Following the discussion of noisy algorithms with fixed summation bound K given in~\cite{Benson2021}, we will provide evidence that adaptive truncation yields very good results.
First, however, we need to introduce a reparametrisation of the COMP employed by the authors that facilitates its application in generalised linear models (here $\lambda = \mu^\nu$):
\begin{equation*}
    \tilde{p}_{\mu, \nu}(n) = \frac{\mu^{\nu n}}{\tilde{Z}(\mu, \nu) (n!)^\nu},
\end{equation*}
where 
\begin{equation}
    \label{eq:COMP_normalising_2}
    \tilde{Z}(\mu, \nu) := \sum_{n=0}^\infty \left(\frac{\mu^n}{n!}\right)^\nu.
\end{equation}

Note that both (\ref{eq:COMP_normalising}) and (\ref{eq:COMP_normalising_2}) series have $\frac{a_{n+1}}{a_n}$ monotonic, so we can take the Error-bounding pair approach.

In their analysis of the inventory data\footnote{Taken from~\url{http://www.stat.cmu.edu/COM-Poisson/Sales-data.html}.} of ~\cite{Shmueli2005}, \cite{Benson2021} place a Gamma(1, 1) prior on $\mu$ and Gamma(0.0625, 0.25) prior on $\nu$, a suggestion we will follow here.
The authors mention that fixing $K = 100$ or $K=3,300$ yields identical results\footnote{The value $K=3300$ was chosen by~\cite{Benson2021} so as to make the noisy algorithm have comparable runtime to their rejection sampler.}, but correctly point out that if the sampler starts out at a point with extreme values such as $(\mu = 500, \nu=0.001)$, it might fail to converge because many more iterations than $K$ are needed to approximate $\tilde{Z}(\mu, \nu)$ to a satisfactory tolerance.
In their Figure 5,~\cite{Benson2021} show that for some values of $\mu$ and $\nu$ the approximation will take many more than 1000 iterations.
In Table~\ref{tab:COMP_iters}, we leverage the techniques developed here to provide the exact numbers of iterations needed to achieve a certain tolerance $\varepsilon$ using approaches 2 and a version of approach 1 in which the real value is known (obtained with a very large number of terms) called here Sequential, for the same parameter values considered by~\cite{Benson2021}.
We also show the errors of two R libraries -- \textbf{brms}~\citep{Burkner2018} and \textbf{COMPoissonreg}~\citep{Sellers2023} -- for the respective parameters, and how their error is above that requested for bounding pairs.

\begin{table}[!ht]
\centering
\begin{tabular}{@{}ccccccc@{}}
\toprule
            & \multicolumn{2}{c}{$\varepsilon = 2.2 \times 10^{-10}$} & \multicolumn{2}{c}{$\varepsilon = 2.2 \times 10^{-16}$} & \multicolumn{2}{c}{} \\ \midrule
  Parameters & Sequential & Bounding & Sequential & Bounding & brms & COMPoissonReg\\
\hline
$\mu = 10^1$, $\nu = 10^{-1}$ & 141 & 139 & 190 & 189 & 1.9569e-14 & 2.4317e-05 \\
$\mu = 10^2$, $\nu = 10^{-2}$ & 1506 & 1482 & 1986 & 1964 & 7.4670e-13 & 4.0071e+01 \\
$\mu = 10^3$, $\nu = 10^{-3}$ & 15907 & 15662 & 20637 & 20411 & 1.1504e-10 & 4.1210e+02\\
$\mu = 10^4$, $\nu = 10^{-4}$ & 167275 & 164854 & 213910 & 211671 & 1.7542e-08 & 4.1368e+03\\ \bottomrule
\end{tabular}
\caption{\textbf{Numbers of iterations needed to approximate the normalising constant of the COMP and error of R libraries}.
We show the number $n$ of iterations needed to obtain $|\tilde{Z}(\mu, \nu) - \sum_{x=0}^n\tilde{p}_{\mu, \nu}(x)| \leq \varepsilon$ for $\varepsilon = \delta$ and $\varepsilon = 10^6 \delta$, where $\delta$ is machine precision (given in R by .Machine\$double.eps).
Results for the Sequential (version of Sum-to-threshold with an approximation of the result) and Error-bounding pair approaches are provided.}
\label{tab:COMP_iters}
\end{table}

In addition, we analyse the inventory data using implementations of the sum-to-threshold~\footnote{Although it fails the ratio test and the sum-to-threshold is valid, for this problem $n_0$ is greater than the number of necessary iterations, that is, sum-to-threshold is not guaranteed for this case.} (approach 1) and error-bounding pair (approach 2) truncation algorithms in the Stan~\citep{Carpenter2017} programming language -- please see Appendix~\ref{app:sec:comp_dets} for details -- and reveal why~\cite{Benson2021} find that $K=100$ is sufficient for the analysis of these data: the median number of iterations needed to approximate the normalising constant to within $\varepsilon \approx 2.2 \times  10^{-16}$ of the truth was around $80$ for approaches 1 and 2.
The results of this analysis are given in Table~\ref{tab:COMP_inventory_results} and show that fixing $K$ is not a good approach in terms of Effective Sample Size (ESS)/minute, the adaptive truncation algorithms perform fewer iterations and achieve very satisfactory performance without burdening the analyst with having to choose $K$.

\begin{table}[]
\centering
\begin{tabular}{@{}ccccccc@{}}
\toprule
                                &              & Posterior median (BCI)    & Posterior sd    & MCSE  & ESS/minute \\ \midrule
\multirow{3}{*}{Threshold}      & $\mu$           & 0.805 (0.530, 1.087) & 0.142 & 0.003 & 126740      \\
                                & $\nu$           & 0.127 (0.104, 0.150) & 0.012 & 0.000     & 127957   &         \\
                                & $n$ & 80 (75, 86)          & 2.901  & 0.048 & 142676                   \\
                                \midrule
\multirow{3}{*}{Error-bounding pair } & $\mu$           & 0.803 (0.533, 1.073)   & 0.138 & 0.003 & 75970   \\
                                & $\nu$           & 0.127 (0.105, 0.149) & 0.011 & 0.000     & 75871    &         \\
                                & $n$ & 81 (76, 88)          & 2.931 & 0.050 & 85034                   \\
                                \midrule
\multirow{2}{*}{Fixed K = 100}  & $\mu$           & 0.800 (0.519, 1.074) & 0.140 & 0.002 & 124857  \\
                                & $\nu$           & 0.127 (0.104, 0.150) & 0.012 & 0.000     & 123964            \\
\multirow{2}{*}{Fixed K = 3300} & $\mu$           & 0.805 (0.536, 1.087) & 0.140 & 0.002 & 5052    \\
                                & $\nu$           & 0.127 (0.105, 0.150) & 0.012 & 0.000 & 5055 \\ \bottomrule 
\end{tabular}
\caption{\textbf{Bayesian analysis of inventory data~\citep{Shmueli2005} under Conway-Maxwell Poisson model using noisy algorithms}.
We show the posterior mean and Bayesian credible interval (BCI) for $\mu$ and $\nu$ and the median number of iterations $n$ needed to get an approximation within $\varepsilon = 2.2 \times 10^{-16}$ of the true normalising constant.
Results for the noisy algorithm with fixed $K$ as discussed in~\cite{Benson2021} are also given for comparison.
We provide estimates of the Monte Carlo standard error (MCSE) and effective sample size (ESS) per minute.
}
\label{tab:COMP_inventory_results}
\end{table}

\subsection{Maximum marginal likelihood in a toy queuing model}
\label{sec:mmle_erlang}

We now move on to study the application of adaptive truncation to marginalisation problems, and choose maximum marginal likelihood estimation (MMLE) as our example.
Consider a very simple queuing model where a (truncated) Poisson number of calls ($Y$) are made and call duration follows an exponential distribution with rate $\beta$.
Moreover, only the total duration of all calls, $X_i$, is recorded.
The model can be written as
\begin{align}
\nonumber
    Y_i &\sim \operatorname{Truncated-Poisson}(\mu; 0), i = 1,\ldots, J,\\
    \nonumber
    Z_j & \sim \operatorname{Exponential}(\beta), j = 1, \ldots, Y_i,\\
    \label{eq:erlang_model}
    X_i &= \sum_{j=1}^{Y_i} Z_j.
\end{align}
It is well-known that the $X_i$ follow the so-called Erlang distribution, i.e., a Gamma distribution where the shape parameter is an integer. 
Our goal is to make inference about $\theta = (\mu, \beta)$ from a collection of i.i.d. observations $\boldsymbol{x} = \{x_1, \ldots, x_J\}$.

In particular, the goal is to maximise the marginal likelihood $L(\theta \mid \boldsymbol{x}) = \prod_{i=1}^J f_X(x_i \mid \mu, \beta)$, in  order to find the maximum marginal likelihood estimate $\hat{\theta} = (\hat{\mu}, \hat{\beta})$.
To this effect, we compute 
\begin{align}
   \label{eq:erlang_marg_like}
    f_X(x \mid \mu, \beta) &= \sum_{n=1}^\infty \operatorname{Pr}(Y=n \mid \mu)f_{X \mid Y}(x \mid Y = n, \beta),\\
    \label{eq:erlang_full_rep}
    &= \sum_{n=1}^\infty \frac{\exp\left(-(\mu + \beta x)\right)}{(1-\exp(-\mu))x} \frac{\left(\mu x \beta\right)^n}{n!(n-1)!},\\
    \label{eq:erlang_bessel_rep}
    &= \frac{\exp\left(-(\mu + \beta x)\right)}{(1-\exp(-\mu))x}\sqrt{\mu\beta x}\cdot I_1\left(2\sqrt{\mu\beta x}\right),
\end{align}
where
\begin{equation*}
    I_v(z) = \left(\frac{z}{2}\right)^v\sum_{k=0}^\infty \frac{\left(\frac{z^2}{4}\right)^k}{k!\Gamma(v + k +1)}, 
\end{equation*}
for $v, z > 0$, is the modified Bessel function of the first kind.
Computation of this special function can be numerically unstable, specially when $\mu$ gets large.
The techniques presented in this paper allow for robust implementations of the Bessel function in log-space, and thus lead to stable computation of the marginal log-likelihood.

In order to study whether adaptive truncation provides an advantage compared to using a fixed truncation bound, we devised a simulation experiment: for a pair of data-generating parameter values $(\mu, \beta)$, we generate $500$ data sets with $J=50$ data points each from the model in (\ref{eq:erlang_model}).
Then, for each data set we found the MLE by maximising the marginal likelihood in (\ref{eq:erlang_marg_like}) by computing either (\ref{eq:erlang_full_rep}), which we will henceforth call the `full' representation or (\ref{eq:erlang_bessel_rep}), which we shall call the Bessel representation.
The optimisation was carried out using the \verb|mle2()| routine of the \textbf{bbmle} package~\citep{Bolker2020}, which implements the Limited-memory Broyden-Fletcher-Goldfarb-Shanno (L-BFGS) algorithm~\citep{Liu1989}.

Interestingly, these two representations do not lead to the same number of iterations under adaptive truncation, with the full representation usually needing fewer iterations to reach the stopping criteria.
We thus exploit these differences in order to understand how they relate to statistical efficiency and numerical accuracy.
For each representation, we compute the MLE using either fixed ($K=1000$) or adaptive truncation by the Error-bounding pairs method\footnote{Since $L=0$, this method is guaranteed to give an approximation of controlled error. Note that, if $L=0$, then the ratio $\frac{a_{n+1}}{a_n}$ is monotonic.}.
Interval estimates in the form of approximate 95\% confidence intervals (CIs) were also computed.
These employ the well-known asymptotic Delta method, and depend on obtaining the Hessian matrix evaluated at the MLE, and in modern implementations, it can be approximated numerically by numerical differentiation. 
Since this procedure can be numerically unstable, especially for higher-order derivatives, we implement the Hessian directly as derived in Appendix~\ref{app:sec:hessian}.
Notice that the computations of the analytic Hessian also rely on adaptive truncation.

Table~\ref{tab:mmle_erlang_mu} shows the average computing time in seconds, the root mean squared error (RMSE) and the coverage of the 95\% CIs under both numerical and analytic (exact) implementations and under both representations. 
The smaller number of iterations needed to reach convergence for the full representation do lead to an overall decreasing in computing time across experimental designs and truncation strategies.
However, the Bessel representation appears to lead to more numerically stable computation, as can be seen by the coverage of its numerical differentiation-based CIs for the case with $\mu=1500$ being close to nominal, whilst the coverage attained by the full representation is substantially lower ($0.65$).

Importantly, the results clearly show that while for some configurations of the data-generating process the results were indistinguishable between fixed and adaptive truncation, for $\mu=1500$ using a fixed cap lead to an RMSE that was twice that of the adaptive approach (for both representations) and coverage that was disastrously low -- none of the estimated CIs were able to trap the true parameter values.
Moreover, the adaptive approach also lead to faster computation in general for $\mu \in \{15, 150\}$, and while being slower for $\mu=1500$, it also yielded much better statistical performance.

\begin{table}[]
\centering
\begin{tabular}{@{}cccccc@{}}
\toprule
True & & Time & RMSE & \multicolumn{2}{c}{Coverage} \\ \midrule
 &  & fixed/adaptive & fixed/adaptive & Numerical & Analytic \\
\multirow{2}{*}{$\mu = 15$} & Bessel & 1.95/2.08 & 3.76/3.76 & 0.91/0.91 & 0.94/0.94 \\
 & Full & 2.81/1.37 & 3.76/3.76 & 0.91/0.91 & 0.94/0.94 \\
\multirow{2}{*}{$\mu = 150$} & Bessel & 1.86/2.33 & 32.97/32.97 & 0.96/0.96 & 0.96/0.96 \\
 & Full & 2.79/1.66 & 32.97/32.97 & 0.96/0.96 & 0.96/0.96 \\
\multirow{2}{*}{$\mu = 1500$} & Bessel & 3.02/10.65 & 579.4/355.7 & 0.00/0.94 & 0.00/0.95 \\
 & Full & 4.89/8.23 & 580.3/355.6 & 0.00/0.65 & 0.00/0.95 \\ \bottomrule
\end{tabular}
\caption{\textbf{Root mean squared error and coverage results for the $\mu$ parameter in the Erlang queuing model}.
Using $500$ replicates per design, we show the average computing time in seconds, root mean squared error (RMSE) and confidence interval coverage for the estimation of $\mu$ -- see Table~\ref{tab:mmle_erlang_beta} for the results for $\beta$.
In all experiments, the true generating $\beta = 0.1$.
All results are shown as fixed/adaptive, where the fixed implementation uses $K=1000$ iterations and the adaptive implementation uses the threshold approach.
We show the coverage of confidence intervals computed using the Hessian matrix approximated using either numerical differentiation or the analytic calculations in Appendix~\ref{app:sec:hessian}.
See text for more details.
}
\label{tab:mmle_erlang_mu}
\end{table}

\section{Discussion}
\label{sec:discussion}

Problems relying on infinite summation are ubiquitous and can be found in fields as diverse as Phylogenetics~\citep{Cilibrasi2011} and Psychology~\citep{Navarro2009}.
Whatever the application, stable and reliable algorithms are of utmost importance to ensure correctness and reproducibility of results, in particular by avoiding or controlling error propagation.
Here some techniques were proposed and analysed for the truncation of infinite sums of non-negative series by unifying the practical aspects of their implementation with analytical justification for their use.
Now, a few of the lessons learned from the efforts reported in the present paper are discussed.

\subsection{Having provable guarantees}

A major concern when implementing an algorithm is numerical stability: can one guarantee that the inevitable errors introduced by representing abstract mathematical objects in floating-point arithmetic remain under control?
Moreover, even if there are no major under/overflow or catastrophic cancellation issues, one might still want to have mathematical guarantees of correctness in the form of controlled truncation error.
This issue is even more evident in approximations within MCMC, since these methods have fragile regularity conditions -- see~\cite{Park2020} and further discussion below.

The results presented here show that so long as one can compute the limit of the consecutive terms ratio, $L < 1$, one can pick the right method to perform adaptive truncation.
Moreover, adaptive truncation lead to better efficiency by saving computation where it was not needed and higher accuracy by allowing truncation bounds to expand when necessary.
It is thus clear that if the regularity conditions described in Section~\ref{sec:prelim} are met, one is much better served by using the algorithms described here.
One must be vigilant however in checking that the correct algorithm for each summation problem is employed.
As shown in the Theorem~\ref{theo:bounding_vs_threshold}, if the ratio $\frac{a_{n+1}}{a_n}$ is monotonic.
It is better to use the error-bounding pairs approach, but it does not guarantee that the difference between the approaches is large.
As shown in the end of Section~\ref{sec:applications}, when $L > 0.5$ picking the Sum-to-threshold might lead to an approximation with higher error than required.
This is because Sum-threshold has no guarantees in this case, despite showing low error in some cases.
See, in particular, Table~\ref{tab:tests_negbinom_64bits}.
Limits are usually straightforward to evaluate, with rare cumbersome exceptions and we discuss a few techniques that might make it easier to find $L$ in algebraically complicated problems in Appendix~\ref{app:sec:computing_L}.

The methods presented here have a broad range of applicability, requiring only mild conditions be met and can be directly applied to doubly-intractable problems such as the Conway-Maxwell Poisson example in Section~\ref{sec:comp_noisy_mcmc}.
They may present an alternative to stochastic truncation techniques such as Russian Roulette~\citep{Lyne2015}, which involve replacing the upper truncation bound $K$ with a random variable for which probabilistic guarantees can be given.
Russian Roulette for example constructs a random variable $\tau_\theta$, which may depend on a set of parameters $\theta$, and the authors are able to show that this preserves unbiasedness. 
A similar technique is discussed in Section 2 of~\cite{Griffin2016} in the context of the compound Poisson process approximation to Lévy processes. 
How deterministic adaptive truncation compares to these stochastic approaches is an interesting question for future research. 

\subsection{Limitations and extensions}

Despite the desirable guarantees that the methods provide, it is natural that they under-perform, in terms of computing time, relative to custom-made methods such as asymptotic approximations~\citep{Gaunt2019} or clever summation techniques that exploit the specific structure of a problem -- see Appendix B in~\cite{Meehan2020}.
A good example is the modified Bessel function of the first kind discussed in Section~\ref{sec:experiments}.
Although the adaptive truncation algorithms are able to yield results that are comparable to custom algorithms such as the one in the \verb|besselI()| function in R, we have found that computation time is ten to fifteen times larger (data not shown).
However, since our methods are provided with explicit guarantees, they can provide a reliable benchmark for the development of faster, custom-made calculations.

Another limitation of the presented methods is that, as mild as the regularity conditions they require are, there are still interesting problems for which they are not suitable. 
A good example is computing the normalising constant when the p.m.f. in question involves a power-law term, as in~\cite{Gillespie2017}.
While custom techniques based on Euler-Maclaurin error-bounding pairs can be shown to be quite powerful in problems with $L=1$ and slowly-converging series in general~\citep{Boas1978,Braden1992}, the challenge is \textbf{algorithmisation}, i.e., being able to turn a powerful technique into a problem-agnostic algorithm that handle many problems in a broad class. 
If these Euler-Maclaurin methods can be made broadly applicable, one might be able to give good theoretical guarantees based on asymptotic bounds on the remainder~\citep{Weniger2007}.

We also do not address series for which terms can be negative, such as those which appear in first-passage time problems as discussed in e.g.~\cite{Navarro2009}.
Their inclusion, while feasible, will require further theoretical and programming work.
Moreover, while our methods are directly applicable to convergent alternating series, we do not pursue that route here.
One reason for this is that, as \cite{Kreminski1997} shows (Example 2 therein), under mild conditions on the alternating series, one can do much better than for these problems than the algorithms proposed here.
Finally, we note that as the example in Section~\ref{sec:mmle_erlang} shows, different representations of the same series can yield faster or slower converging summation problems.
These differences can thus be exploited for series acceleration (see Chapter 8 in~\cite{Small2010}), the algorithmisation of which is a worthy goal for future research.

In closing, we hope the present paper provides the statistical community with a robust set of tools for accurate and stable computation of the many infinite sums that crop up in modern statistical applications.

\section*{Code availability}

An R package implementing the methods described here is available from~\url{https://github.com/GuidoAMoreira/sumR}.
The calculations are performed at low level, that is, they are programmed in C, and new versions are uploaded to CRAN as soon as they are stable.
A Python package is also available at~\url{https://github.com/wellington36/InfSumPy}.
The \textit{mpmath} library was used for high-precision numerical evaluation.
The library is available on PyPi.
Stan code implementing the algorithms can be obtained from~\url{https://github.com/GuidoAMoreira/stan_summer} and code to implement the Conway-Maxwell Poisson in Stan is at~\url{https://github.com/wellington36/MCMC_COMPoisson}.
Scripts using \textbf{sumR} to reproduce the results presented in the paper can be found at~\url{https://github.com/maxbiostat/truncation_tests}.
And using \textbf{InfSumPy} can be found at~\url{https://github.com/wellington36/adaptive_truncation_table_generator}.

\section*{Acknowledgements}

We thank H\"avard Rue, Ben Goodrich, Alexandre B. Simas, Ben Goldstein, Hugo A. de la Cruz, Alan Benson and Jairon Batista for enlightening discussions.

\bibliography{adaptive_truncation}

\begin{thebibliography}{}

\bibitem[Aleshin-Guendel et~al., 2021]{Aleshin2021}
Aleshin-Guendel, S., Sadinle, M., and Wakefield, J. (2021).
\newblock Revisiting identifying assumptions for population size estimation.
\newblock {\em arXiv preprint arXiv:2101.09304}.

\bibitem[Alquier et~al., 2016]{Alquier2016}
Alquier, P., Friel, N., Everitt, R., and Boland, A. (2016).
\newblock Noisy {M}onte {C}arlo: Convergence of markov chains with approximate
  transition kernels.
\newblock {\em Statistics and Computing}, 26(1-2):29--47.

\bibitem[Benson and Friel, 2021]{Benson2021}
Benson, A. and Friel, N. (2021).
\newblock Bayesian inference, model selection and likelihood estimation using
  fast rejection sampling: The {C}onway-{M}axwell-poisson distribution.
\newblock {\em Bayesian Analysis}.

\bibitem[Betancourt, 2017]{Betancourt2017}
Betancourt, M. (2017).
\newblock A conceptual introduction to {H}amiltonian {M}onte {C}arlo.
\newblock {\em arXiv preprint arXiv:1701.02434}.

\bibitem[Blumberg and Lloyd-Smith, 2013]{Blumberg2013b}
Blumberg, S. and Lloyd-Smith, J.~O. (2013).
\newblock Comparing methods for estimating {$R_0$} from the size distribution
  of subcritical transmission chains.
\newblock {\em Epidemics}, 5(3):131--145.

\bibitem[Boas, 1978]{Boas1978}
Boas, R.~P. (1978).
\newblock Estimating remainders.
\newblock {\em Mathematics Magazine}, 51(2):83--89.

\bibitem[Bolker and {R Development Core Team}, 2021]{Bolker2020}
Bolker, B. and {R Development Core Team} (2021).
\newblock {\em bbmle: Tools for General Maximum Likelihood Estimation}.
\newblock R package version 1.0.24.

\bibitem[Braden, 1992]{Braden1992}
Braden, B. (1992).
\newblock Calculating sums of infinite series.
\newblock {\em The American mathematical monthly}, 99(7):649--655.

\bibitem[Burns and Daniels, 2023]{Burns2023}
Burns, N. and Daniels, M.~J. (2023).
\newblock Truncation approximation for enriched dirichlet process mixture
  models.
\newblock {\em arXiv preprint arXiv:2305.01631}.

\bibitem[Bürkner, 2017]{Burkner2018}
Bürkner, P.-C. (2017).
\newblock {brms}: An {R} package for {Bayesian} multilevel models using {Stan}.
\newblock {\em Journal of Statistical Software}, 80(1):1--28.

\bibitem[Carpenter et~al., 2017]{Carpenter2017}
Carpenter, B., Gelman, A., Hoffman, M.~D., Lee, D., Goodrich, B., Betancourt,
  M., Brubaker, M., Guo, J., Li, P., and Riddell, A. (2017).
\newblock Stan: A probabilistic programming language.
\newblock {\em Journal of statistical software}, 76(1):1--32.

\bibitem[Cilibrasi and Vit{\'a}nyi, 2011]{Cilibrasi2011}
Cilibrasi, R.~L. and Vit{\'a}nyi, P.~M. (2011).
\newblock A fast quartet tree heuristic for hierarchical clustering.
\newblock {\em Pattern recognition}, 44(3):662--677.

\bibitem[Conway and Maxwell, 1962]{Conway1962}
Conway, R.~W. and Maxwell, W.~L. (1962).
\newblock A queuing model with state dependent service rates.
\newblock {\em Journal of Industrial Engineering}, 12(2):132--136.

\bibitem[Dunn and Smyth, 2005]{Dunn2005}
Dunn, P.~K. and Smyth, G.~K. (2005).
\newblock Series evaluation of tweedie exponential dispersion model densities.
\newblock {\em Statistics and Computing}, 15(4):267--280.

\bibitem[Efron, 1986]{Efron1986}
Efron, B. (1986).
\newblock Double exponential families and their use in generalized linear
  regression.
\newblock {\em Journal of the American Statistical Association},
  81(395):709--721.

\bibitem[Ferreira and L{\'o}pez, 2004]{Ferreira2004}
Ferreira, C. and L{\'o}pez, J.~L. (2004).
\newblock Asymptotic expansions of the {H}urwitz--{L}erch zeta function.
\newblock {\em Journal of Mathematical Analysis and Applications},
  298(1):210--224.

\bibitem[Gaunt et~al., 2019]{Gaunt2019}
Gaunt, R.~E., Iyengar, S., Daalhuis, A. B.~O., and Simsek, B. (2019).
\newblock An asymptotic expansion for the normalizing constant of the
  conway--maxwell--poisson distribution.
\newblock {\em Annals of the Institute of Statistical Mathematics},
  71(1):163--180.

\bibitem[Gillespie et~al., 2017]{Gillespie2017}
Gillespie, C.~S. et~al. (2017).
\newblock Estimating the number of casualties in the {A}merican {I}ndian war: a
  bayesian analysis using the power law distribution.
\newblock {\em The Annals of Applied Statistics}, 11(4):2357--2374.

\bibitem[Goldberg, 1991]{Goldberg1991}
Goldberg, D. (1991).
\newblock What every computer scientist should know about floating-point
  arithmetic.
\newblock {\em ACM computing surveys ({CSUR})}, 23(1):5--48.

\bibitem[Griffin, 2016]{Griffin2016}
Griffin, J.~E. (2016).
\newblock An adaptive truncation method for inference in {B}ayesian
  nonparametric models.
\newblock {\em Statistics and Computing}, 26(1-2):423--441.

\bibitem[Hall, 2004]{Hall2004}
Hall, A.~R. (2004).
\newblock {\em Generalized method of moments}.
\newblock OUP Oxford.

\bibitem[Higham, 2002]{Higham2002}
Higham, N.~J. (2002).
\newblock {\em Accuracy and stability of numerical algorithms}.
\newblock SIAM.

\bibitem[Kahan, 1965]{Kahan1965}
Kahan, W. (1965).
\newblock Pracniques: further remarks on reducing truncation errors.
\newblock {\em Communications of the ACM}, 8(1):40.

\bibitem[Kreminski, 1997]{Kreminski1997}
Kreminski, R. (1997).
\newblock Using {S}impson's rule to approximate sums of infinite series.
\newblock {\em The College Mathematics Journal}, 28(5):368--376.

\bibitem[Liu and Nocedal, 1989]{Liu1989}
Liu, D.~C. and Nocedal, J. (1989).
\newblock On the limited memory bfgs method for large scale optimization.
\newblock {\em Mathematical programming}, 45(1):503--528.

\bibitem[Lotze and Raim, 2023]{Sellers2023}
Lotze, K. S.~T. and Raim, A. (2023).
\newblock {\em COMPoissonReg: Conway-Maxwell Poisson (COM-Poisson) Regression}.
\newblock R package version 0.8.1.

\bibitem[Lyne et~al., 2015]{Lyne2015}
Lyne, A.-M., Girolami, M., Atchad{\'e}, Y., Strathmann, H., and Simpson, D.
  (2015).
\newblock On russian roulette estimates for bayesian inference with
  doubly-intractable likelihoods.
\newblock {\em Statistical science}, 30(4):443--467.

\bibitem[Meehan et~al., 2020]{Meehan2020}
Meehan, T.~D., Michel, N.~L., and Rue, H. (2020).
\newblock Estimating animal abundance with n-mixture models using the r-inla
  package for r.
\newblock {\em Journal of Statistical Software}, 95(2):1–26.

\bibitem[mpmath~development team, 2023]{Johansson2023}
mpmath~development team, T. (2023).
\newblock {\em mpmath: a {P}ython library for arbitrary-precision
  floating-point arithmetic (version 1.3.0)}.
\newblock {\tt https://mpmath.org/}.

\bibitem[Navarro and Fuss, 2009]{Navarro2009}
Navarro, D.~J. and Fuss, I.~G. (2009).
\newblock Fast and accurate calculations for first-passage times in wiener
  diffusion models.
\newblock {\em Journal of mathematical psychology}, 53(4):222--230.

\bibitem[Neal, 2015]{Neal2015}
Neal, R.~M. (2015).
\newblock Fast exact summation using small and large superaccumulators.
\newblock {\em arXiv preprint arXiv:1505.05571}.

\bibitem[Park and Haran, 2020]{Park2020}
Park, J. and Haran, M. (2020).
\newblock A function emulation approach for doubly intractable distributions.
\newblock {\em Journal of Computational and Graphical Statistics},
  29(1):66--77.

\bibitem[Pullin et~al., 2020]{Pullin2020}
Pullin, J., Gurrin, L., and Vukcevic, D. (2020).
\newblock Rater: An r package for fitting statistical models of repeated
  categorical ratings.
\newblock {\em arXiv preprint arXiv:2010.09335}.

\bibitem[{Python Core Team}, 2024]{python312}
{Python Core Team} (2024).
\newblock {\em {Python: A dynamic, open source programming language}}.
\newblock {Python Software Foundation}.
\newblock Python version 3.12.

\bibitem[{R Core Team}, 2022]{R}
{R Core Team} (2022).
\newblock {\em R: A Language and Environment for Statistical Computing}.
\newblock R Foundation for Statistical Computing, Vienna, Austria.

\bibitem[Robert and Roberts, 2021]{Robert2021}
Robert, C.~P. and Roberts, G. (2021).
\newblock {R}ao--{B}lackwellisation in the {M}arkov chain {M}onte {C}arlo era.
\newblock {\em International Statistical Review}.

\bibitem[Roberts and Stramer, 2002]{Roberts2002}
Roberts, G.~O. and Stramer, O. (2002).
\newblock Langevin diffusions and {M}etropolis-{H}astings algorithms.
\newblock {\em Methodology and computing in applied probability},
  4(4):337--357.

\bibitem[Royle, 2004]{Royle2004}
Royle, J.~A. (2004).
\newblock N-mixture models for estimating population size from spatially
  replicated counts.
\newblock {\em Biometrics}, 60(1):108--115.

\bibitem[Rudin, 1964]{Rudin1964}
Rudin, W. (1964).
\newblock {\em Principles of {M}athematical {A}nalysis}, volume~3.
\newblock McGraw-hill New York.

\bibitem[Rump et~al., 2008]{Rump2008}
Rump, S.~M., Ogita, T., and Oishi, S. (2008).
\newblock Accurate floating-point summation part i: Faithful rounding.
\newblock {\em SIAM Journal on Scientific Computing}, 31(1):189--224.

\bibitem[Rump et~al., 2009]{Rump2009}
Rump, S.~M., Ogita, T., and Oishi, S. (2009).
\newblock Accurate floating-point summation part ii: Sign, k-fold faithful and
  rounding to nearest.
\newblock {\em SIAM Journal on Scientific Computing}, 31(2):1269--1302.

\bibitem[Sellers et~al., 2012]{Sellers2012}
Sellers, K.~F., Borle, S., and Shmueli, G. (2012).
\newblock The {COM-P}oisson model for count data: a survey of methods and
  applications.
\newblock {\em Applied Stochastic Models in Business and Industry},
  28(2):104--116.

\bibitem[Shmueli et~al., 2005]{Shmueli2005}
Shmueli, G., Minka, T.~P., Kadane, J.~B., Borle, S., and Boatwright, P. (2005).
\newblock A useful distribution for fitting discrete data: revival of the
  {C}onway--{M}axwell--{P}oisson distribution.
\newblock {\em Journal of the Royal Statistical Society: Series C (Applied
  Statistics)}, 54(1):127--142.

\bibitem[Small, 2010]{Small2010}
Small, C.~G. (2010).
\newblock {\em Expansions and asymptotics for statistics}.
\newblock Chapman and Hall/CRC.

\bibitem[Vehtari et~al., 2021]{Vehtari2021}
Vehtari, A., Gelman, A., Simpson, D., Carpenter, B., and Bürkner, P.-C.
  (2021).
\newblock {Rank-Normalization, Folding, and Localization: An Improved
  $\widehat{R}$ for Assessing Convergence of MCMC (with Discussion)}.
\newblock {\em Bayesian Analysis}, 16(2):667 -- 718.

\bibitem[Wei and Murray, 2017]{Wei2017}
Wei, C. and Murray, I. (2017).
\newblock Markov chain truncation for doubly-intractable inference.
\newblock In {\em Artificial Intelligence and Statistics}, pages 776--784.
  PMLR.

\bibitem[Weniger, 2007]{Weniger2007}
Weniger, E.~J. (2007).
\newblock Asymptotic approximations to truncation errors of series
  representations for special functions.
\newblock In {\em Algorithms for Approximation}, pages 331--348. Springer.

\end{thebibliography}

\appendix

\setcounter{table}{0}
\renewcommand{\thetable}{S\arabic{table}}
\renewcommand{\thefigure}{S\arabic{figure}}

\section{Proofs}
\label{app:sec:proofs}

Proof of Proposition~\ref{prop:infinite_series_bounds}:
\begin{proof}
First define the series $r_n = \frac{a_{n+1}}{a_n}$. Now define the remainder $R_n = S - S_n = \sum_{k = n+1}^\infty a_k$.
Now assume that $r_n$ decreases to $L$.
Then
\begin{equation}
\begin{aligned}
R_n &= a_{n-1} \left( \frac{a_{n+1}}{a_{n-1}} + \frac{a_{n+2}}{a_{n-1}} + \frac{a_{n+3}}{a_{n-1}} + \ldots\right) \\
&= a_{n-1} \left( \frac{a_{n}}{a_{n-1}}\frac{a_{n+1}}{a_{n}} + \frac{a_{n}}{a_{n-1}}\frac{a_{n+1}}{a_{n}}\frac{a_{n+2}}{a_{n+1}} + \frac{a_{n}}{a_{n-1}}\frac{a_{n+1}}{a_{n}}\frac{a_{n+2}}{a_{n+1}}\frac{a_{n+3}}{a_{n+2}} + \ldots\right) \\
&= a_{n-1} \left( r_{n-1}r_{n} + r_{n-1}r_{n}r_{n+1} + r_{n-1}r_{n}r_{n+1}r_{n+2} + \ldots\right) \\
&< a_{n-1} \left( r_{n-1}r_{n-1} + r_{n-1}r_{n-1}r_{n-1} + r_{n-1}r_{n-1}r_{n-1}r_{n-1} + \ldots\right) \\
&= a_{n-1} r_{n-1}^2 \left( 1 + r_{n-1} + r_{n-1}^2 + \ldots\right) \\
&=a_{n}r_{n-1}\sum_{k=0}^\infty r_{n-1}^k = a_{n} \frac{r_{n-1}}{1 - r_{n-1}} \\
&= a_{n} \frac{\frac{a_{n}}{a_{n-1}}}{1 - \frac{a_{n}}{a_{n-1}}} = a_{n} \frac{\frac{a_{n}}{a_{n-1}}}{\frac{a_{n-1} - a_{n}}{a_{n-1}}} \\
&= a_{n} \frac{a_{n}}{a_{n-1} - a_{n}} = a_{n} \left(\frac{1}{1 - \frac{a_{n}}{a_{n-1}}} \right).
\end{aligned}
\end{equation}
\noindent On the other hand, since $r_n > L$ for all $n$,
\begin{equation}
\begin{aligned}
R_n  &= a_{n} \left( \frac{a_{n+1}}{a_{n}} + \frac{a_{n+2}}{a_{n}} + \frac{a_{n+3}}{a_{n}} + \ldots\right) \\
&= a_{n} \left( \frac{a_{n+1}}{a_{n}} + \frac{a_{n+1}}{a_{n}}\frac{a_{n+2}}{a_{n+1}} + \frac{a_{n+1}}{a_{n}}\frac{a_{n+2}}{a_{n+1}}\frac{a_{n+3}}{a_{n+2}} + \ldots\right) \\
&= a_{n} r_n \left(1 + r_{n+1}+r_{n+1}r_{n+2}+r_{n+1}r_{n+2}r_{n+3}+\ldots\right)\\
&< a_{n}L\left(1+L+L^2+L^3+\ldots\right)\\
&=a_{n}L\sum_{k=0}^\infty L^k = a_{n} \frac{L}{1-L}.
\end{aligned}
\end{equation}
\noindent For the case in which $r_n$ increases to $L$, the proof is analogous, with inequality signs reversed.
See also Theorem 3 in~\cite{Braden1992}.
Note that if $r_n$ oscillates around $L$, one can easily compute the bounding pair endpoints and order the values so as to obtain a proper bounding pair with the claimed guarantees.
\end{proof}

Now, the proof of Proposition~\ref{prop:L_less_than_M}:
\begin{proof}
First, assume that $a_n < \varepsilon$. Then, for every $\varepsilon > 0$ and $n > n_0$

\begin{equation}
\begin{aligned}
S - S_n &= a_{n+1} + a_{n+2} + a_{n+3} + \ldots \\
&= a_n \left( \frac{a_{n+1}}{a_n} + \frac{a_{n+2}}{a_n} + \frac{a_{n+3}}{a_n} + \ldots \right) \\
&= a_n \left( \frac{a_{n+1}}{a_n} + \frac{a_{n+1}}{a_n} \frac{a_{n+2}}{a_{n+1}} + \frac{a_{n+1}}{a_n} \frac{a_{n+2}}{a_{n+1}} \frac{a_{n+3}}{a_{n+2}} + \ldots \right) \\
&\leq a_n (M + M^2 + M^3 + \ldots)\\
&< \varepsilon M (1 + M + M^2 + \ldots)\\
&= \varepsilon M \sum_{k=0}^\infty M^k = \varepsilon \left( \frac{M}{1-M} \right).
\end{aligned}
\end{equation}
\end{proof}

Proof of Theorem~\ref{theo:bounding_vs_threshold}:

\begin{proof}
First, given the monotonicity hypothesis, we will find the values of $M$ that give the best results in the Sum-to-threshold approach and finally show that this is below the bounding pairs approach

\begin{lemma}
    Let $\{a_n\}$ positive, decreasing and pass in the ratio test with limit $L < 1$.
    Suppose $r_n = \frac{a_{n+1}}{a_n}$ are non-decreasing to L.
    If $a_n < \varepsilon$ for a given $\varepsilon$, we have that, $S - S_n \leq \varepsilon \frac{L}{1-L} < \varepsilon \frac{M}{1-M},\ \forall M \in (L, 1)$.
    i.e. $M = L$ minimise $S - S_n$ with a fixed $\varepsilon$.
\end{lemma}

\begin{proof}
    Suppose that $\exists M^\star \in (L, 1)$ such that $\varepsilon \frac{M^\star}{1 - M^\star} < \varepsilon \frac{L}{1 - L}$.

    If $M^\star > L$, then $\varepsilon \frac{M^\star}{1 - M^\star} > \varepsilon \frac{L}{1 - L}$;
    
    If $M^\star < L$, then $\exists n_0 \in \mathbb{N}$ such that $\frac{a_{n+1}}{a_n} \geq M^\star$ because $\lim \frac{a_{n+1}}{a_n} = L$.

    Then, non exists $M^\star$, to complete the proof, note that, $\frac{a_{n+1}}{a_n} \leq L$ because $\frac{a_{n+1}}{a_n}$ is non-decreasing.
\end{proof}

\begin{lemma}
    Let $\{a_n\}$ positive, decreasing and pass in the ratio test with limit $L < 1$.
    Suppose $r_n = \frac{a_{n+1}}{a_n}$ non-increasing to L.
    If $a_n < \varepsilon$ for a given $\varepsilon$, we have that, $S - S_n \leq \varepsilon \frac{r_n}{1-r_n} < \varepsilon \frac{M}{1-M},\ \forall M \in (L, 1)$.
    i.e. $M = r_n$ minimise $S - S_n$ with a fixed $\varepsilon$.
\end{lemma}

\begin{proof}
    Suppose that $\exists M^\star \in (L, 1)$ such that $\varepsilon \frac{M^\star}{1 - M^\star} < \varepsilon \frac{r_n}{1 - r_n}$.

    If $M^\star > r_n$, then $\varepsilon \frac{M^\star}{1 - M^\star} > \varepsilon \frac{r_n}{1 - r_n}$;

    If $M^\star < r_n$, then in particular $M^\star < \frac{a_{n+1}}{a_n}$, which breaks one of the hypotheses.

    Then, non exists $M^\star$, to complete the proof, note that, $\frac{a_{k+1}}{a_k} \leq \frac{a_{n+1}}{a_n},\ \forall k \geq n$ because $\frac{a_{n+1}}{a_n}$ is non-increasing.
\end{proof}

Now we will prove that this result isn't better than the Bounding Pairs approach.
Consider first $\{a_n\}$ with the hypothesis of the Bounding Pairs approach.
Let $\{r_n\}$, when $r_n = \frac{a_{n+1}}{a_n}$, then with the above results

$$
S - S_n \leq a_{n+1} \frac{L}{1 - L}
$$

if $\{r_n\}$ is non-decreasing, and

$$
S - S_n \leq a_{n+1} \frac{r_n}{1 - r_n}
$$

if $\{r_n\}$ is non-increasing.
Now the Bounding Pairs approach says that our error is

$$
\frac{a_{n+1}}{2} |(1-L)^{-1} - (1 - r_n)^{-1}|.
$$

Checking for each case, first, for $r_n$ non-decreasing

\begin{align*}
    \frac{a_{n+1}}{2} ((1-L)^{-1} - (1 - r_n)^{-1}) \leq a_{n+1} \frac{L}{1-L} &\Leftrightarrow \frac{1}{1-L} - \frac{1}{1 - r_n} \leq \frac{2L}{1-L} \\
    &\Leftrightarrow \frac{(1 - r_n) - (1 - L)}{(1-L)(1-r_n)} \leq \frac{2L (1-r_n)}{(1-L)(1-r_n)} \\
    &\Leftrightarrow L - r_n \leq 2L (1 - r_n) \\
    &\Leftrightarrow 0 \leq L - 2r_nL - r_n
\end{align*}

But we have that

$$
0 \leq (L - r_n)^2 = L^2 - 2r_n L + r_n^2 \leq L - 2r_n L + r_n.
$$

And the equality is valid only when $r_n = L$.
Now we will prove for $r_n$ non-increasing

\begin{align*}
    \frac{a_{n+1}}{2} ((1 - r_n)^{-1} - (1-L)^{-1})&\Leftrightarrow a_{n+1} \frac{r_n}{1 - r_n} \\
    &\Leftrightarrow (1 - r_n)^{-1} - (1 - L)^{-1} \leq \frac{2 r_n}{1 - r_n} \\
    &\Leftrightarrow \frac{(1 - L) - (1 - r_n)}{(1 - r_n) (1 - L)} \leq \frac{2 r_n (1 - L)}{(1 - r_n) (1 - L)} \\
    &\Leftrightarrow r_n - L \leq 2 r_n (1 - L) \\
    &\Leftrightarrow 0 \leq L - 2 r_n L + r_n.
\end{align*}

Analogous to the case above, we have the result, and the equality is valid when $r_n = L$.
\end{proof}

Proof of Remark~\ref{rmk:factmom}:
\begin{proof}
First, notice that 
\begin{equation*}
    E_{\mathcal{M}}[X^r] = \sum_{j=0}^r \stirling{r}{j}  E_{\mathcal{M}}[(X)_j],
\end{equation*}
where $\stirling{a}{b}$ are Stirling numbers of the second kind.
This means that $E_{\mathcal{M}}[X^r] < \infty \implies E_{\mathcal{M}}[(X)_r] < \infty$ and thus the series under consideration is absolutely convergent; notice the support of $X$.
Now the factorial moment can be computed explicitly:
\begin{equation*}
    E_{\mathcal{M}}[(X)_r]  = \sum_{n=r}^\infty \underbrace{\frac{n!}{(n-r)!} f(n\mid \boldsymbol{\theta})}_{a_n},
\end{equation*}
from which one can conclude that $\lim_{n \to \infty} a_{n+1}/a_n = 1 \cdot L < 1$ -- see Appendix~\ref{app:sec:computing_L}.
It now remains to be shown that $E_{\mathcal{M}}[X^r]$ exists.
This much is clear from the fact that $\lim_{n \to \infty}(n+1)^r/n^r = 1$ and thus $\left(y_n\right)_{n\geq0}$, with  $y_n = n^r f(n \mid \boldsymbol{\theta})$, passes the ratio test.
If $f(x+1\mid \boldsymbol{\theta})/f(x\mid \boldsymbol{\theta})$ is decreasing we have that

\begin{align*}
    \frac{a_{n+1}}{a_n} \geq \frac{a_{n+2}}{a_{n+1}} &\Leftrightarrow \frac{n+1}{n+1-r} \frac{f(n+1\mid \boldsymbol{\theta})}{f(n\mid \boldsymbol{\theta})} \geq \frac{n+2}{n+2-r} \frac{f(n+2\mid \boldsymbol{\theta})}{f(n+1\mid \boldsymbol{\theta})}.
\end{align*}
Note that $\frac{n+1}{n+1-r}$ is decreasing in n, so since $f(n+1\mid \boldsymbol{\theta})/f(n\mid \boldsymbol{\theta})$ is also decreasing it is valid that $\frac{a_{n+1}}{a_n} \geq \frac{a_{n+2}}{a_{n+1}}$, so we can apply Bounding pairs approach.
\end{proof}

Proof of Proposition~\ref{prop:marginal}:
\begin{proof}
It is required that
\begin{equation}
\lim_{n\rightarrow\infty}\frac{\operatorname{Pr}(X = n+1, Y\in E)}{\operatorname{Pr}(X=n, Y\in E)} < 1.
\end{equation}
However, due to the definition of conditional probability $\operatorname{Pr}(X=n, Y\in E) = \operatorname{Pr}(X=n\mid Y\in E)\operatorname{Pr}(Y\in E)$, note that when $\operatorname{Pr}(Y\in E) > 0$,
\begin{equation}
\frac{\operatorname{Pr}(X=n+1, Y\in E)}{\operatorname{Pr}(X=n, Y\in E)} = \frac{\operatorname{Pr}(X=n+1\mid Y\in E)}{\operatorname{Pr}(X=n\mid Y\in E)}.
\end{equation}
The proof then follows by assumption.
\end{proof}

Proof of Remark~\ref{rmk:marg_count_error}:
\begin{proof}
First, notice that if $f_Y(y \mid \theta)$ is a p.m.f. and $\lim_{n \to \infty} f_Y(y + 1 \mid \theta)/f_Y(y \mid \theta) =: L \leq 1$.
Next, consider the size-dependent case:
\begin{align*}
    f_X(x \mid \theta) &= \sum_{n=x}^\infty \binom{n}{x}p^x (1-p)^{n-x} f_Y(y \mid \theta),\\
    &= \left(\frac{p}{1-p}\right)^x \frac{1}{x!} \sum_{n=x}^\infty \underbrace{\frac{n!}{(n-x)!} (1-p)^n f_Y(y \mid \theta)}_{g(n)}.
\end{align*}
From this we can surmise
\begin{align*}
    \lim_{n \to \infty} \frac{g(n+1)}{g(n)} &= \lim_{n \to \infty} \frac{n+1}{n+1-x} (1-p)L,\\
    &= L(1-p) < 1,
\end{align*}
which shows that the series of interest now passes the ratio test.
A very similar calculation shows that the problem in equation~\eqref{eq:p_of_X_sentinel} attains ratio $L(1-\eta)$ and thus also passes the ratio test.
\end{proof}

Now, the short proof of Remark~\ref{rmk:comp_naive}:
\begin{proof}
The ratio of consecutive probabilities for the COMP is 
\begin{equation*}
    r_k := \frac{\operatorname{Pr}(X = k + 1 \mid \lambda, \nu)}{\operatorname{Pr}(X = k \mid \lambda, \nu)} = \frac{\lambda}{(k + 1)^\nu}.
\end{equation*}
Thus, $L = \lim_{n \to \infty} r_k = 0$ and note that $\frac{r_{k+1}}{r_{n}}$ is decreasing because $\frac{r_{k+1}}{r_{n}} > 0$ and converges to 0.
Then we can apply Proposition~\ref{prop:infinite_series_bounds}.
\end{proof}

\section{A generalization for bounding pairs}
\label{app:sec:genera_version_of_bounding_pairs}

A somewhat more general result for Proposition~\ref{prop:infinite_series_bounds} can be identified in Proposition~\ref{prop:limits_of_infinite_series_general_form}.

\begin{proposition}[\textbf{Bounding a convergent infinite series with a index function}]
\label{prop:limits_of_infinite_series_general_form}
Let $S_n := \sum_{k = 0}^n a_k$.
Under the assumptions that $\left(a_n\right)_{n\geq 0}$ is positive, decreasing and passes the ratio test, then for every $0 \leq n < \infty$.
Let $r_n = \frac{a_{n+1}}{a_n} \to L$, if $\exists \psi: \mathbb{N} \to \{0,1\}$ such that $\psi(n) = 1$ if $r_n > L$ and 0 otherwise.
The following holds:

\begin{equation}
\label{eq:sum_approx}
 S_{n} + a_{n+1} \left( \frac{1}{1-r_\text{inf}}\right) < S < S_{n} + a_{n+1} \left( \frac{1}{1-r_\text{sup}}\right),
\end{equation}

where, $r_\text{inf} := \min\limits_{\begin{smallmatrix} \psi(i)=0 & \\ i \in \mathbb{N} \setminus I_n \end{smallmatrix}}\{x_i, L\}$ and $r_\text{sup} := \max\limits_{\begin{smallmatrix} \psi(i)=1 & \\ i \in \mathbb{N} \setminus I_n \end{smallmatrix}}\{x_i, L\}$, being $I_n = \{1, 2, \ldots, n\}$.
\end{proposition}
\begin{proof}
First define the remainder $R_n = S - S_n = \sum_{k = n+1}^\infty a_k = a_{n+1}\left(1 + \frac{a_{n+2}}{a_{n+1}} + \frac{a_{n+3}}{a_{n+1}} + \ldots\right)$.
Further expand 
$$
R_n = a_{n+1}\left(1 + \frac{a_{n+2}}{a_{n+1}} + \frac{a_{n+3}}{a_{n+2}}\frac{a_{n+2}}{a_{n+1}} + \ldots\right) = a_{n+1}\left(1+r_{n+1}+r_{n+1}r_{n+2}+\ldots\right).
$$
Note that $\forall k > n, x_k < x_\text{sup} = \max\limits_{\begin{smallmatrix} \psi(i)=1 & \\ i \in \mathbb{N} \setminus I_n \end{smallmatrix}}\{x_i, L\}$, being $I_n = \{1, 2, \ldots, n\}$. Then
\begin{equation}
\begin{aligned}
R_n &= a_{n+1}\left(1 + r_{n+1}+r_{n+1}r_{n+2}+r_{n+1}r_{n+2}r_{n+3}+\ldots\right)\\
& < a_{n+1}\left(1 + x_\text{sup} + x_\text{sup}^2 + x_\text{sup}^3 + \ldots\right)\\
&= a_{n+1}\sum_{k=0}^\infty x_\text{sup}^k = \frac{a_{n+1}}{1 - x_\text{sup}}.
\end{aligned}
\end{equation}

On the other hand, $\forall k > n, x_k > x_\text{inf} = \min\limits_{\begin{smallmatrix} \psi(i)=0 & \\ i \in \mathbb{N} \setminus I_n \end{smallmatrix}}\{x_i, L\}$,

\begin{equation}
\begin{aligned}
R_n &= a_{n+1}\left(1 + r_{n+1}+r_{n+1}r_{n+2}+r_{n+1}r_{n+2}r_{n+3}+\ldots\right)\\
& > a_{n+1}\left(1 + x_\text{inf} + x_\text{inf}^2 + x_\text{inf}^3 + \ldots\right)\\
&= a_{n+1}\sum_{k=0}^\infty x_\text{inf}^k = \frac{a_{n+1}}{1 - x_\text{inf}}.
\end{aligned}
\end{equation}
\end{proof}

\section{Computational details}
\label{app:sec:comp_dets}

Here we will give details on the computational specs of the analyses presented in the main text.

\textbf{Stan:} all Stan runs used four parallel chains running 5000 iterations for warm-up and 5000 for sampling, resulting in 20,000 draws being saved for computing estimates.
We employed \verb|tree_depth=12| and \verb|adapt_delta=0.90|.
MCSE and ESS estimates were computed using the \verb|monitor()| function available in the package~\textbf{rstan}, according to the procedures laid out in~\cite{Vehtari2021}.
We used~\textbf{cmdstanr} version 0.8.0 with version 2.35 of cmdstan.

\textbf{Machine:} some analyses were performed on a Dell G5 laptop equipped with an Intel Core i7-9750H CPU with 12 cores and 12 MB cache and 16 GB of RAM, running Ubuntu 20.04 and R 4.0.4. 
And some analyses were performed on a Lenovo IdeaPad Gaming 3 laptop equipped with an Intel Core i7-11370H CPU with 8 cores and 16 GB of RAM, running Manjaro Linux and R 4.4.2.

\section{Tests}
\label{app:sec:tests}

In this section we provide a discussion on a range of examples where the correct answer is known in closed-form and thus the accuracy of approximation methods can be assessed more precisely.

\paragraph{Design:} for each example below we approximate the desired sum using Error-bounding, Threshold, fixed cap $K=1000$ and a fixed cap $M = 500, 000$ iterations, which will serve as proxy of the best one can realistically do on a modern computer with double precision and high precision.
Let $S$ be the sum of interest and $\hat{S}$ be the approximation (truncated) sum. 
We compute the approximation error in real space robustly as
\begin{equation*}
    \hat{\varepsilon} = \exp\left(M + \log(1 + \exp(m-M))\right),
\end{equation*}
where $m = \min \{ \log S, \log \hat{S}\}$ and $ M = \max \{ \log S, \log \hat{S}\}$.
Notice that sometimes the approximation error can be positive for the Error-bounding approach, since it might overshoot the answer.
As long as this overshooting stays below $\varepsilon$, however, the approximation is working as expected.

\paragraph{Evaluation of algorithms:} we evaluate algorithms in two respects: whether the approximation returned the answer within the desired accuracy, $\varepsilon$, which we will call the~\textbf{target} error and, failing that, whether the approximation returned an error smaller than the error achieved by employing $M = 500, 000$ iterations.
This procedure accounts for the fact that sometimes it is not possible to achieve a certain degree of accuracy, even with a large number of iterations due to numerical limitations inherit to floating point arithmetic. 

\subsection{Poisson factorial moments}

As discussed in Section~\ref{sec:moments}, computing factorial moments is an important application of truncation methods.
For the Poisson distribution, the factorial moment has a simple closed-form, namely
\begin{align*}
    E[\left(X\right)_r] &= \sum_{n= 0}^\infty \operatorname{Pr}(X = n \mid \lambda) n\cdot (n-1)\cdots (n-r+1),\\
    &= \sum_{n=r}^\infty \frac{\lambda^n \exp(-\lambda)}{n!} \frac{n!}{(n-r)!},\\
    &= \lambda^r,
\end{align*}
for $r\geq 1$. 
For this example it can be shown that $L = 0$, and thus we expect the Sum-to-threshold algorithm to return the correct answer.
In our tests we used $\lambda = \{0.5, 1, 10, 100\}$, $r = \{ 2, 5, 10\}$ and $\varepsilon = 2.2 \times \{10^{-16}, 10^{-15}, 10^{-12}\}$.
Table~\ref{tab:tests_poisson_fact_mom_64bits} shows the overall results across these parameter combinations and target errors ($\varepsilon$), from which we can see that all approaches reach a high success rate in providing satisfactory approximations.
And in Table~\ref{tab:tests_poisson_fact_mom_high_prec} we can see that the values converge when at a higher precision, which is a solution to this type of problem.

\begin{table}[ht]
\centering
\begin{tabular}{cccc}
  \hline
Method & $|S-\hat{S}| < \varepsilon$ & $\frac{|S-\hat{S}|}{|S - S_M|} \leq 1$ & Either \\ 
  \hline
  Error-bounding pairs & 0.61 & 0.55 & 1.00 \\ 
  Threshold & 0.61 & 0.53 & 0.75 \\ 
  Cap = $1\times 10^3$ & 0.61 & 0.75 & 0.75 \\ 
  Cap = $5\times 10^5$ & 1.00 & 1.00 & 1.00 \\
   \hline
\end{tabular}
\caption{\textbf{Success rates of the various truncation methods for the Poisson factorial moment problem with 64-bits of precision}.
We show for the fraction of runs in which (i) the truncated sum was within $\varepsilon$ of the true answer ($|S-\hat{S}| < \varepsilon$), (ii) the truncation error was the same or smaller than that of the sum with a large number of iterations $\left(\frac{|S-\hat{S}|}{|S - S_M|} \leq 1\right)$ or (iii) either of these conditions were met.
}
\label{tab:tests_poisson_fact_mom_64bits}
\end{table}

\begin{table}[ht]
\centering
\begin{tabular}{cccc}
  \hline
Method & $|S-\hat{S}| < \varepsilon$ & $\frac{|S-\hat{S}|}{|S - S_M|} \leq 1$ & Either \\ 
  \hline
  Error-bounding pairs & 1.00 & 0.00 & 1.00 \\ 
  Threshold & 0.75 & 0.00 & 0.75 \\ 
  Cap = $1\times 10^3$ & 0.75 & 0.75 & 0.75 \\ 
  Cap = $5\times 10^5$ & 1.00 & 1.00 & 1.00 \\
   \hline
\end{tabular}
\caption{\textbf{Success rates of the various truncation methods for the Poisson factorial moment problem with 100 decimal places of precision}.
We show for the fraction of runs in which (i) the truncated sum was within $\varepsilon$ of the true answer ($|S-\hat{S}| < \varepsilon$), (ii) the truncation error was the same or smaller than that of the sum with a large number of iterations $\left(\frac{|S-\hat{S}|}{|S - S_M|} \leq 1\right)$ or (iii) either of these conditions were met.
}
\label{tab:tests_poisson_fact_mom_high_prec}
\end{table}

\subsection{Negative binomial model with size-independent observation error}

We now present a marginalisation problem for which $L$ can take values in $\left(\frac{1}{2}, 1\right)$, which is useful for testing the accuracy of Sum-to-threshold when there are no mathematical guarantees that it will produce a truncation error less than $\varepsilon$.
Consider the following model:
\begin{align*}
    Y &\sim \operatorname{Negative-binomial(\mu, \phi)},\\
    X \mid Y &\sim \operatorname{Binomial}(Y, \eta).
\end{align*}
We would like to compute the marginal probability mass function for the observed counts:
\begin{align*}
    \operatorname{Pr}(X = x \mid \mu, \phi, \eta) &= \sum_{y=x}^\infty \binom{y + \phi -1}{y} \left(\frac{\mu}{\mu + \phi}\right)^y \left(\frac{\phi}{\mu + \phi}\right)^\phi \binom{y}{x}\eta^x (1-\eta)^{y-x},\\
    &= \binom{x + \phi - 1}{x} \left(\frac{\eta\mu}{\eta\mu + \phi} \right)^x\left(\frac{\phi}{\eta\mu + \phi}\right)^\phi,
\end{align*}
i.e., the marginal probability mass function of $X$ is a negative binomial with parameters $\eta\mu$ and $\phi$.
For this series, we can show that
\begin{equation*}
    L = \left(\frac{\mu}{\mu + \phi}\right)(1-\eta),
\end{equation*}
which can lie anywhere in $(0,1)$.

In this set of experiments we thus constructed a full grid of values for $\mu = \{ 1, 10, 100 \}$, $\phi =\{0.1, 0.5, 1, 10 \}$, $\eta = \{0.01, 0.1, 0.5, 0.75 \}$, $x = \{0, 5, 10\}$ and $\varepsilon = 2.2 \times \{10^{-16}, 10^{-15}, 10^{-12}\}$.
We stratify the results by whether a certain parameter combination leads to $L > 1/2$ in order to evaluate the performance of the Sum-to-Threshold and `fixed cap' approaches.
The results presented in Table~\ref{tab:tests_negbinom_64bits} show that when $L > 1/2$, using Sum-to-threshold or a moderate fixed cap ($K = 1000$) can lead to inaccuracies in the approximation.
As expected, Error-bounding pairs has satisfactory guarantees since it is the only one in this example that have proven guarantees, since Sum-to-threshold loses its guarantees due to the large $n_0$.
In Table~\ref{tab:tests_poisson_fact_mom_high_prec} we see the experiment more accurately.
Note that in this last case, even with 100 decimal places of precision, we still have a precision error.
Since the bounding pairs method has a guarantee in this case and did not reach 1.00.

\begin{table}[!ht]
\centering
\begin{tabular}{ccccc}
  \hline
Method & $L > 1/2$ & $|S-\hat{S}| < \varepsilon$ & $\frac{|S-\hat{S}|}{|S - S_M|} \leq 1$ & Either \\ 
  \hline
  Error-bounding pairs & No & 0.86 & 0.03 & 0.87 \\ 
  Threshold & No & 0.97 & 0.14 & 0.98 \\ 
  Cap = $1\times 10^3$  & No & 0.98 & 1.00 & 1.00 \\ 
  Cap = $5\times 10^5$  & No & 0.98 & 1.00 & 1.00 \\ 
  Error-bounding pairs & Yes & 0.93 & 0.20 & 0.97 \\ 
  Threshold & Yes & 0.75 & 0.09 & 0.77 \\ 
  Cap = $1\times 10^3$  & Yes & 0.75 & 0.78 & 0.78 \\ 
  Cap = $5\times 10^5$  & Yes & 0.93 & 1.00 & 1.00 \\ 
   \hline
\end{tabular}
\caption{\textbf{Success rates of the various truncation methods for the Negative binomial with observation error problem with 64-bits of precision}.
We show for the fraction of runs in which (i) the truncated sum was within $\varepsilon$ of the true answer ($|S-\hat{S}| < \varepsilon$), (ii) the truncation error was the same or smaller than that of the sum with a large number of iterations $\left(\frac{|S-\hat{S}|}{|S - S_M|} \leq 1\right)$ or (iii) either of these conditions were met.
In addition, we stratify results by situations in which $L > 1/2$ as this has an impact on the accuracy of certain methods (see Section~\ref{sec:technical} in the main text).
}
\label{tab:tests_negbinom_64bits}
\end{table}

\begin{table}[!ht]
\centering
\begin{tabular}{ccccc}
  \hline
Method & $L > 1/2$ & $|S-\hat{S}| < \varepsilon$ & $\frac{|S-\hat{S}|}{|S - S_M|} \leq 1$ & Either \\ 
  \hline
  Error-bounding pairs & No & 0.91 & 0.00 & 0.91 \\ 
  Threshold & No & 0.99 & 0.00 & 0.99 \\ 
  Cap = $1\times 10^3$  & No & 1.00 & 1.00 & 1.00 \\ 
  Cap = $5\times 10^5$  & No & 1.00 & 1.00 & 1.00 \\ 
  Error-bounding pairs & Yes & 1.00 & 0.00 & 1.00 \\ 
  Threshold & Yes & 0.03 & 0.00 & 0.03 \\ 
  Cap = $1\times 10^3$  & Yes & 0.78 & 0.68 & 0.71 \\ 
  Cap = $5\times 10^5$  & Yes & 1.00 & 1.00 & 1.00 \\ 
   \hline
\end{tabular}
\caption{\textbf{Success rates of the various truncation methods for the Negative binomial with observation error problem with 100 decimal places of precision}.
We show for the fraction of runs in which (i) the truncated sum was within $\varepsilon$ of the true answer ($|S-\hat{S}| < \varepsilon$), (ii) the truncation error was the same or smaller than that of the sum with a large number of iterations $\left(\frac{|S-\hat{S}|}{|S - S_M|} \leq 1\right)$ or (iii) either of these conditions were met.
In addition, we stratify results by situations in which $L > 1/2$ as this has an impact on the accuracy of certain methods (see Section~\ref{sec:technical} in the main text).
}
\label{tab:tests_negbinom_high_prec}
\end{table}

\section{Hessian matrix for the queuing model example}
\label{app:sec:hessian}
In this section we provide the necessary calculations to compute the Hessian matrix for the model presented in~\autoref{sec:mmle_erlang}.
If we let 
$$
l(\mu, \beta) := \log\left( L(\theta \mid \boldsymbol{x}) \right) = \sum_{i=1}^J \log\left(f_X(x_i \mid \mu, \beta)\right),
$$
be the log-likelihood, the Hessian will be 
\begin{equation*}
   \mathcal{H}(\mu, \beta) = \begin{bmatrix}
\frac{\partial^2 l}{\partial \mu^2} &
    \frac{\partial^2 l}{\partial \mu\beta} \\
  \frac{\partial^2 l}{\partial \beta\mu} &
    \frac{\partial^2 l}{\partial \beta^2}
\end{bmatrix}.
\end{equation*}
Let $p_\mu(n) := \operatorname{Pr}(Y = n \mid \mu)$.
Then, recalling that
\begin{equation*}
     f_X(x_i \mid \mu, \beta) = \sum_{n=1}^\infty p_\mu(n) f_{X \mid Y}(x_i \mid Y = n, \beta),
\end{equation*}
we can define $g_i(\mu, \beta) = \log(f_X(x_i \mid \mu, \beta))$ to get
\begin{align*}
    \frac{\partial^2 l}{\partial \mu^2} &= \sum_{i=1}^J  \frac{\partial^2 g_i}{\partial \mu^2} =  \sum_{i=1}^J  \frac{f_X(x_i \mid \mu, \beta)f_X^{(2, 0)}(x_i \mid \mu, \beta)-\left[f_X^{(1, 0)}(x_i \mid \mu, \beta)\right]^2}{\left[f_X(x_i \mid \mu, \beta)\right]^2},\\
    \frac{\partial^2 l}{\partial \mu\beta} &= \sum_{i=1}^K \frac{\partial^2 g_i}{\partial \mu\beta} = \sum_{i=1}^J \frac{f_X^{(1, 1)}(x_i \mid \mu, \beta)}{f_X(x_i \mid \mu, \beta)} ,\\
    \frac{\partial^2 l}{\partial \beta^2} &= \sum_{i=1}^J \frac{\partial^2 g_i}{\partial \beta^2} = \sum_{i=1}^J  \frac{f_X(x_i \mid \mu, \beta)f_X^{(0, 2)}(x_i \mid \mu, \beta)-\left[f_X^{(0, 1)}(x_i \mid \mu, \beta)\right]^2}{\left[f_X(x_i \mid \mu, \beta)\right]^2},
\end{align*}
whence
\begin{align*}
   f_X^{(1, 0)}(x_i \mid \mu, \beta) &:= \sum_{n=1}^\infty p_\mu^\prime(n)f_{X\mid Y}(x_i \mid Y = n, \beta),\\
   f_X^{(2, 0)}(x_i \mid \mu, \beta)&:= \sum_{n=1}^\infty p_\mu^{\prime\prime}(n)f_{X\mid Y}(x_i \mid Y = n, \beta),\\
   f_X^{(0, 1)}(x_i \mid \mu, \beta) &:= \sum_{n=1}^\infty p_\mu(n)f_{X\mid Y}^\prime(x_i \mid Y = n, \beta),\\
   f_X^{(0, 2)}(x_i \mid \mu, \beta) &:= \sum_{n=1}^\infty p_\mu(n)f_{X\mid Y}^{\prime\prime}(x_i \mid Y = n, \beta),\\
   f_X^{(1, 1)}(x_i \mid \mu, \beta) &:= \sum_{n=1}^\infty \left\{ p_\mu^\prime(n)f_{X\mid Y}(x_i \mid Y = n, \beta) + p_\mu(n)f_{X\mid Y}^\prime(x_i \mid Y = n, \beta) \right\},\\
   &= f_X^{(1, 0)}(x_i \mid \mu, \beta) + f_X^{(0, 1)}(x_i \mid \mu, \beta)
\end{align*}

Now, moving on to the actual computations, we have
\begin{align*}
    p_\mu^\prime(n) &= \frac{n \mu^{n-1} \exp(-\mu)}{n!} - \frac{\mu^{n-1} \exp(-\mu)}{n!},\\
    &= p_\mu(n-1) - p_\mu(n)
\end{align*}
which implies that
\begin{equation*}
     f_X^{(1, 0)}(x_i \mid \mu, \beta) = \sum_{n=1}^\infty p_\mu(n-1)f_{X\mid Y}(x_i \mid Y = n, \beta) - f_X(x_i \mid \mu, \beta).
\end{equation*}
For the second derivative w.r.t $\mu$ we have
\begin{align*}
    p_\mu^{\prime\prime}(n) &= \frac{n \mu^{n-1} \exp(-\mu)}{n!} - \frac{\mu^{n-1} \exp(-\mu)}{n!},\\
    &= \frac{p_\mu(n)[\mu - n]^2}{\mu^2} - \frac{p_\mu(n)}{\mu}.
\end{align*}
Thus,
\begin{equation*}
     f_X^{(2, 0)}(x_i \mid \mu, \beta) = \frac{1}{\mu ^2}\sum_{n=1}^\infty [\mu - n]^2 p_\mu(n)f_{X\mid Y}(x_i \mid Y = n, \beta) - \frac{f_X(x_i \mid \mu, \beta)}{\mu}.
\end{equation*}
Now,
\begin{align*}
    f_{X\mid Y}^{\prime}(x_i \mid Y = n, \beta) &=  \frac{n}{\beta} f_{X\mid Y}(x_i \mid Y = n, \beta) -x f_{X\mid Y}(x_i \mid Y = n, \beta),\\
    f_{X\mid Y}^{\prime\prime}(x_i \mid Y = n, \beta) &= \frac{(x_i\beta - n)^2-n}{\beta^2} f_{X\mid Y}(x_i \mid Y = n, \beta),
\end{align*}
hence
\begin{align*}
     f_X^{(0, 1)}(x_i \mid \mu, \beta) & = \frac{1}{\beta}\sum_{n=1}^\infty n p_\mu(n) f_{X\mid Y}(x_i \mid Y = n, \beta)  - x_i f_X(x_i \mid \mu, \beta),\\
     f_X^{(0, 2)}(x_i \mid \mu, \beta) & = \frac{1}{\beta^2}\sum_{n=1}^\infty   [(x_i\beta - n)^2-n] p_\mu(n) f_{X\mid Y}(x_i \mid Y = n, \beta).
\end{align*}

\section{Computing $L$}
\label{app:sec:computing_L}

In many situations it might be hard to compute $L$ in closed-form because the summand is really complicated. 
Here we discuss a few techniques that might be helpful for computing $L$ in order to use the techniques described here.
The first technique one can employ is exploiting the properties of limits to break down the problem into smaller limits that can be computed more easily.
To apply the techniques described here to the double Poisson normalising constant problem in Section~\ref{sec:norm_consts}, we need to compute
\begin{align*}
    L &:= \lim_{n \to \infty} \frac{ \tilde{p}_{\mu, \phi}(n + 1)}{ \tilde{p}_{\mu, \phi}(n)},\\
    &= \lim_{n \to \infty} \frac{\exp(-(n+1))(n+1)^{n+1} \left(\frac{\exp(1)\mu}{(n+1)}\right)^{\phi (n+1)} }{(n+1)!} \cdot \frac{n!}{\exp(-n)(n)^{n} \left(\frac{\exp(1)\mu}{n}\right)^{\phi n}},\\
    &= \lim_{n \to \infty} \left[\left(\frac{n+1}{n}\right)^n\right]^{\phi-1}\left(\frac{\exp(1)\mu}{n + 1}\right)^{\phi}\exp(-1),\\
\end{align*}
from which one might not immediately arrive at $L=0$.
If one however writes $f_1(x) = \exp(-x)$, $f_2(x) = x^x/x!$ and $f_3(x) = (\exp(1)\mu/x)^{(\phi x)}$, one can then arrive at
\begin{align*}
    L &= \lim_{n \to \infty} \frac{ f_1(n + 1)}{ f_1(n)} \frac{ f_2(n + 1)}{ f_2(n)} \frac{ f_3(n + 1)}{ f_3(n)},\\
     &= \lim_{n \to \infty} \frac{ f_1(n + 1)}{ f_1(n)} \cdot \lim_{n \to \infty} \frac{ f_2(n + 1)}{ f_2(n)} \cdot \lim_{n \to \infty} \frac{ f_3(n + 1)}{ f_3(n)},\\
      &= \lim_{n \to \infty}\exp(-1) \cdot  \lim_{n \to \infty} \left(\frac{n+1}{ n}\right)^n \cdot  \lim_{n \to \infty} \left(\frac{1}{n+1}\right)^{\phi} \left(\frac{n}{n+1}\right)^{\phi n} (\exp(1)\mu)^\phi,\\
            &= \lim_{n \to \infty}\exp(-1) \cdot  \lim_{n \to \infty} \left(\frac{n+1}{ n}\right)^n \cdot \lim_{n \to \infty} \left(\frac{1}{n+1}\right)^{\phi} \cdot  \lim_{n \to \infty} \left(\frac{n}{n+1}\right)^{\phi n} \cdot \lim_{n \to \infty}(\exp(1)\mu)^\phi,\\
      &= \exp(-1) \cdot 1 \cdot 0 \cdot 1 \cdot (\exp(1)\mu)^\phi = 0,
\end{align*}
by exploiting the properties of limits of products of functions.
In summary, breaking the target function down into smaller chunks may prove a winning strategy. 
This formulation has the added benefit of laying bare the slow rate of convergence when $\phi$ is small -- made slower when $\mu$ is large. 

\section{Supplementary figures and tables}
\label{app:sec:suppstuff}

\begin{table}[!h]
    \centering
    \begin{tabular}{cccc}
        \hline a & \text{Sum-to-threshold} &  \text{Bounding-pairs} & \text{Percentage of additional terms} \\
        \hline
        2 & 42 & 40 & -4,7\% \\
        1.1 & 291 & 277 & -4,8\% \\
        1.01 & 2577 & 2447 & -5,0\% \\
        1.001 & 23533 & 22249 & -5,4\% \\
        1.0001 & 214007 & 2013361 & -5,9\% \\
        1.00001 & 1930341 & 1805124 & -6,5\% \\
        \hline
    \end{tabular}
    \caption{Asymptotic Domination of the Bounding Pairs Approach over Sum-to-Threshold.}
    \label{tab:stt_vs_bp}
\end{table}


\begin{table}[]
\centering
\begin{tabular}{@{}ccccc@{}}
\toprule
True & Representation & RMSE & \multicolumn{2}{c}{Coverage} \\ \midrule
 & &  fixed/adaptive & Numerical & Analytic \\
\multirow{2}{*}{$\mu$ = 15} & Bessel & 0.02/0.02 & 0.93/0.93 & 0.95/0.95 \\
 & Full & 0.02/0.02 & 0.93/0.93 & 0.95/0.95 \\
\multirow{2}{*}{ $\mu$ = 150} & Bessel & 0.02/0.02 & 0.95/0.95 & 0.96/0.96 \\
 & Full & 0.02/0.02 & 0.95/0.95 & 0.96/0.96 \\
\multirow{2}{*}{$\mu$ = 1500} & Bessel & 0.20/0.39 & 0.00/0.94 & 0.00/0.95 \\
 & Full & 0.20/0.39 & 0.00/0.66 & 0.00/0.95 \\ \bottomrule
\end{tabular}
\caption{\textbf{Root mean squared error and coverage for the estimates of $\beta$ in the Erlang queuing model}.
See Table~\ref{tab:mmle_erlang_mu} for more details.
Note that we have fixed $\beta = 0.1$ in these experiments also.
}
\label{tab:mmle_erlang_beta}
\end{table}

\end{document}